\newtheorem{theorem}{Theorem}[section]
\newtheorem{definition}[theorem]{Definition}
\newenvironment{proof}[1]{\begin{trivlist}\item[\hskip\labelsep{\it
Proof.\ }]}{\hspace*{\fill} {$\Box$}\end{trivlist}}
\newcommand{\Rmnum}[1]{\expandafter\@slowromancap\romannumeral #1@}
\title{Program-size versus Time complexity\\{\small Slowdown and speed-up phenomena in the micro-cosmos\\ 
\vspace{-0.2cm} 
of small Turing machines}}
\author{Joost J. Joosten\inst{1} \and Fernando Soler-Toscano\inst{2} \and Hector Zenil\inst{3,4}}
\institute{Departament de l\`ogica, hist\`oria i filosofia de la ci\`encia, Universitat de Barcelona \email{jjoosten@ub.edu}\and
Grupo de L\'ogica, Lenguaje e Informaci\'on,
Departamento de Filosof\'{\i}a, L\'ogica, y Filosof\'{\i}a de la Ciencia,
Universidad de Sevilla
\email{fsoler@us.es}
 \and 
 Laboratoire d'Informatique Fondamentale de Lille\\(CNRS), Universit\'e de Lille I
 \and
 Wolfram Research, Inc.
\email{hectorz@wolfram.com}
 }
\begin{document}

\maketitle

\begin{abstract}
The aim of this paper is to undertake an experimental investigation of the trade-offs between program-size and time computational complexity. The investigation includes an exhaustive exploration and systematic study of the functions computed by the set of all 2-color Turing machines with 2 states --we will write (2,2)-- and 3 states --we write (3,2)-- with particular attention to the runtimes, space-usages and patterns corresponding to the computed functions when the machines have access to larger resources (more  states). 

We report that the average runtime of Turing machines computing a function increases --with a probability close to one-- with the number of states, indicating that machines not terminating (almost) immediately tend to occupy all the resources at hand. We calculated all time complexity classes to which the algorithms computing the functions found in both (2,2) and (3,2) belong to, and made comparison among these classes. For a selection of functions the comparison is extended to (4,2).

Our study revealed various structures in the micro-cosmos of small Turing Machines. Most notably we observed ``phase-transitions'' in the halting-probability distribution. Moreover, it is observed that small initial segments fully define a function computed by a TM.

\noindent \textbf{Keywords}: small Turing machines, Program-size complexity, Kolmogorov-Chaitin complexity, space/time complexity, computational complexity, algorithmic complexity.
\end{abstract}

\section{Introduction}

Among the several measures of computational complexity there are measures focusing on the minimal description of a program and others quantifying the resources (space, time, energy) used by a computation.
%
%
This paper is a reflection of an ongoing project with the ultimate goal of contributing to the understanding  of relationships between various measures of complexity by means of computational experiments. In particular in the current paper we did the following.

We focused on small Turing Machines and looked at the kind of functions that are computable on them focussing on the runtimes. We then study how allowing more computational resources in the form of Turing machine states affect the runtimes of TMs computing these functions. We shall see that in general and on average, more resources leads to slower computations. In this introduction we shall briefly introduce the main concepts central to the paper.

\subsection{Two measures of complexity}

The long run aim of the project focuses on the relationship between various complexity measures, particularly descriptional and computational complexity measures. In this subsection we shall briefly and informally introduce them. 

In the literature there are results known to theoretically link some complexity notions. 
For example, in \cite{calude}, runtime probabilities were estimated based on Chaitin's heuristic principle as formulated in \cite{chaitin}. Chaitin's principle is of descriptive theoretic nature and states that \emph{the theorems of a finitely-specified theory cannot be significantly more complex than the theory itself}.

Bennett's concept of logical depth combines the concept of time complexity and program-size complexity \cite{bennett,bennett2} by means of the time that a decompression algorithm takes to decompress an object from its shortest description.

Recent work by Neary and Woods \cite{neary} has shown that the simulation of cyclic tag systems by cellular automata is effected with a polynomial slow-down, setting a very low threshold of possible non-polynomial tradeoffs between program-size and computational time complexity.

\subsubsection{Computational Complexity}
Computational complexity \cite{cook2,levin} analyzes the difficulty of computational problems in terms of computational resources. The computational time complexity of a problem is the number of steps that it takes to solve an instance of the problem using the most efficient algorithm, as a function of the size of the representation of this instance. 

As widely known, the main open problem with regard to this measure of complexity is the question of whether problems that can be solved in non-deterministic polynomial time can be solved in deterministic polynomial time, aka the P versus NP problem. Since P is a subset of NP, the question is whether NP is contained in P. If it is, the problem may be translated as, for every Turing machine computing an NP function 
 there is (possibly) another Turing machine that does so in P time. In principle one may think that if in a space of all Turing machines with a certain fixed size there is no such a P time machine for the given function 
 (and because a space of smaller Turing machines is always contained in the larger) only by adding more resources a more efficient algorithm, perhaps in P, might be found. We shall see that adding more resources almost certainly yields to slow-down.

\subsubsection{Descriptional Complexity}
The algorithmic or program-size complexity \cite{kolmogorov,chaitin} of a binary string is informally defined as the shortest program that can produce the string. There is no algorithmic way of finding the shortest algorithm that outputs a given string

More precisely, the complexity of a bit string $s$ is the length of the string's shortest program in binary on a fixed universal Turing machine. 
A string is said to be complex or random if its shortest description cannot be much more shorter than the length of the string itself. And it is said to be simple if it can be highly compressed. There are several related variants of algorithmic complexity or algorithmic information.

In terms of Turing machines, if $M$ is a Turing machine which on input $i$ outputs string $s$, then the concatenated string $\langle M,i\rangle$ is a description of $s$. The size of a Turing machine in terms of the number of states (s) and colors (k) (aka known as symbols) 
can be represented 
by the product $s \cdot k$. Since we are fixing the number of colors to $k=2$ in our study, we increase the number of states $s$ as a mean for increasing the program-size (descriptional) complexity of the Turing machines in order to study any possible tradeoffs with any of the other complexity measures in question, particularly computational (time) complexity.

\subsection{Turing machines}
Throughout this project the computational model that we use will be that of Turing
machines. Turing machines are well-known models for universal
computation. This means, that anything that can be computed at all,
can be computed on a Turing machine.


In its simplest form, a Turing machine consists of a two-way infinite
tape that is divided in adjacent cells. Each cell can be either blank
or contain a non-blank color (symbol). The Turing machine comes with a
``head" that can move over the cells of the tape. Moreover, the
machine can be in 
various states. At each step in time, the machine
reads what color is under the head, and then, depending on in what
state it is writes a (possibly) new color in the cell under the head,
goes to a (possibly) new state and have the head move either left or
right. A specific Turing machine is completely determined by its
behavior at these time steps. One often speaks of a transition rule,
or a transition table. Figure~\ref{fig:TMRule2506} depicts graphically
such a transition rule when we only allow for 2 colors, black and white and where there are two states, State 1 and State 2.

\begin{figure}[htb!]
  \centering
  \includegraphics[height=1.3cm]{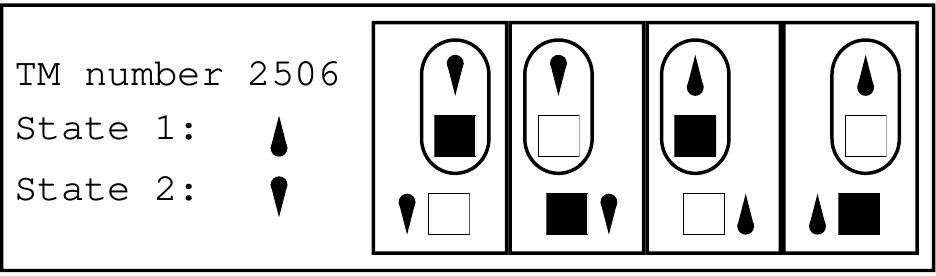}
  \caption{Transition table of a 2-color 2-state Turing machine with Rule 2506 according to Wolfram's enumeration and Wolfram's visual representation style \cite{wolfram}, \cite{JoostenDemonstration2010}.}
  \label{fig:TMRule2506}
\end{figure}

For example, the head of this machine will only move to the right,
write a black color and go to State 2 whenever the machine was in
State 2 and it read a blank symbol.

We shall often refer to the collection of TMs with $k$ colors and $s$ states as a TM space.
From now on, we shall write (2,2) for the space of TMs with 2 states and 2 colors, and (3,2) for the space of TMs with 3 states and 2 colors, etc.

%

\subsection{Relating notions of complexity}

%

We relate and explore throughout the experiment the connections between descriptional complexity and time computational complexity. One way to increase the descriptional complexity of a Turing machine is enlarging its transition table description by adding a new state. So what we will do is, look at time needed to perform certain computational tasks first with only 2 states, and next with 3 and 4 states.

Our current findings suggest that even if a more efficient Turing machine algorithm solving a problem instance may exist, the probability of picking a machine algorithm at random among the TMs that solve the problem in a faster time has probability close to 0 because the number of slower Turing machines computing a function outnumbers the number of possible Turing machines speeding it up by a fast growing function. 




\subsection{Investigating the micro-cosmos of small Turing machines}
We know that small programs are capable of great complexity. For example, computational universality occurs in cellular automata with just 2 colors and nearest neighborhood (Rule 110, see \cite{wolfram,cook}) and also (weak) universality in Turing machines with only 2-states and 3-colors \cite{wolfram23}.

For all practical purposes one is 
restricted to perform experiments with small Turing machines (TMs) if one pursuits a thorough investigation of complete spaces for a certain size.
Yet the space of these machines is rich and large enough to allow for interesting and insightful comparison, draw some preliminary conclusions and shed light on the relations between measures of complexity. 

As mentioned before, in this paper, we look at TMs with 2 states and 2 colors and compare them to TMs more states. The main focus is on the functions they compute and the runtimes for these functions
However, along our investigation we shall deviate from time to time from our main focus and marvel at the rich structures present in what we like to refer to as \emph{the micro-cosmos of small Turing machines}.
Like, what kind of, and how many functions are computed in each space? What kind of runtimes and space-usage do we typically see and how are they arranged over the TM space? What are the sets that are definable using small Turing machines? How many input values does one need to fully determine the function computed by a TM? We find it amazing how rich the encountered structures are even when we use so few resources.
%
%

\subsection{Plan of the paper}
After having introduced the main concepts of this paper and after having set out the context in this section, the remainder of this paper is organized as follows.
In Section \ref{Section:Methodology} we will in full detail describe the experiment, its methodology and the choices that were made leading us to the current methodology. In Section \ref{Section:22} we present the structures that we found in (2,2). The main focus is on runtimes but a lot of other rich structures are exposed there. In Section \ref{Section:32} we do the same for the space (3,2). Section \ref{Section:42} deals with (4,2) but does not disclose any additional structure of that space as we did not exhaustively search this space. Rather we sampled from this space looking for functions we selected from (3,2).
In Section \ref{Section:Comparing} we compare the various TM spaces focussing on the runtimes of TMs that compute a particular function. 
\section{Methodology and description of the experiment}\label{Section:Methodology}

In this section we shall briefly restate the set-up of our experiment to then fill out the details and motivate our choices. We try to be as detailed as possible for a readable paper. For additional information, source code, figures and obtained data can be requested from any of the authors.

\subsection{Methodology in short}
It is not hard to see that any computation in (2,2) is also present in (3,2).
At first, we look at TMs in (2,2) and compare them to TMs in (3,2). In particular we shall study the functions they compute
and the time they take to compute in each space. 

The way we proceeded is as follows. We ran all the TMs in (2,2) and (3,2) for 1000 steps for the first 21 input values $0,1, \ldots,20$. If a TM does not halt by 1000 steps we simply say that it diverges. We saw that certain TMs defined a regular progression of runtimes that needed more than 1000 steps to complete the calculation for larger input values. For these regular progressions we filled out the values manually as described in Subsection \ref{Section:Cleansing}. Thus, we collect all the functions on the domain $[0,20]$ computed in (2,2) and (3,2) and investigate and compare them in terms of run-time, complexity  and space-usage. 
We selected some interesting functions from (2,2) and (3,2). For these functions we searched by sampling for TMs in (4,2) that compute them so that we could include (4,2) in our comparison. 

Clearly, at the outset of this project we needed to decide on at least the following issues:
\begin{enumerate}
\item \label{item:representingNumbers}
How to represent numbers on a TM?

\item
How to decide which function is computed by a particular TM.

\item
Decide when a computation is considered finished.
\end{enumerate}

The next subsections will fill out the details of the technical choices made and provide motivations for these choices. Our set-up is reminiscent of and motivated by a similar investigation in Wolfram's book \cite{wolfram}, Chapter 12, Section 8.

\subsection{Resources}\label{subsection:Resources}
There are ${(2sk)}^{sk}$ s-state k-color Turing machines. That means 4\,096 in (2,2) and 2\,985\,984 TMs in (3,2). In short, the number of TMs grows exponentially in the amount of resources. Thus, in representing our data and conventions we should be as economical as possible in using our resources so that exhaustive search in the smaller spaces still remains feasible. For example, an additional halting state will immediately increase the search space\footnote{Although in this case not exponentially so, as halting states define no transitions.}.

\subsection{One-sided Turing Machines}

In our experiment we have chosen to work with one-sided TMs. That is
to say, we work with TMs with a tape that is unlimited to the left but
limited to the right-hand side.  One sided TMs are a common convention in the literature just perhaps 
slightly less
common 
than the 
two sided convention. The following considerations led
us to work with one-sided TMs.

\begin{itemize}
\item[-] Efficient (that is, non-unary) number representations are
  place sensitive. That is to say, the interpretation of a digit
  depends on the position where the digit is in the number. Like in
  the decimal number 121, the leftmost 1 corresponds to the
  centenaries, the 2 to the decades and the rightmost 1 to the
  units. On a one-sided tape which is unlimited to the left, but
  limited on the right, it is straight-forward how to interpret a tape
  content that is almost everywhere zero. For example, the tape
  $\ldots 00101$ could be interpreted as a binary string giving rise
  to the decimal number 5. For a two-sided infinite tape one can think
  of ways to come to a number notation, but all seem rather arbitrary.

\item[-] With a one-sided tape there is no need for an extra halting
  state. We say that a computation simply halts whenever the head
  ``drops off'' the tape from the right hand side. That is, when the
  head is on the extremal cell on the right hand side and receives the
  instruction to moves right. A two-way unbounded tape would require an
  extra halting state which, in the light of considerations in
  \ref{subsection:Resources} is undesirable.
\end{itemize}
On the basis of these considerations, and the fact that some work has been done before in the lines of this experiment \cite{wolfram} that also contributed to motivate our own investigation, we decided to fix the TM formalism and choose the one-way tape model.

\subsection{Unary input representation}
Once we had chosen to work with TMs with a one-way infinite tape, the next choice is how to represent the input values of the function. When working with two colors, there are basically two choices to be made: unary or binary. However, there is a very subtle point if the input is represented in binary. If we choose for a binary representation of the input, the class of functions that can be computed is rather unnatural and very limited. 

The main reason is as follows. Suppose that a TM on input $x$ performs
some computation. Then the TM will perform the very same computation
for any input that is the same as $x$ on all the cells that were
visited by the computation. That is, the computation will be the same
for an infinitude of other inputs thus limiting the class of functions
very severely. 
We can make this more precise in Theorem \ref{theorem:Strips} below.
The theorem shows that coding the input in $k$-ary where $k$ is the number of alphabet symbols, severely restricts the class of computable functions in $(s,k)$ . 
For convenience and for the sake of our presentation, we only consider the binary case, that is, $k=2$. 
\begin{definition}
A subset of the natural numbers is called a \emph{strip} if it is of the form $\{  a + b\cdot n \mid n\in \omega \}$ for certain fixed natural numbers $a$ and $b$.
\end{definition}

\begin{definition}
A \emph{strip} of a function $f$ is simply $f$ restricted to some subdomain $\mathcal{D}$ where $\mathcal{D}$ is a  subset of the natural numbers that is a strip. 
\end{definition}

\begin{theorem}[The Strips Theorem]\label{theorem:Strips}
Let $f$ be a function that is calculated by a one sided TM. Then $f$ consists of strips of functions of the form $a + x$.
\end{theorem}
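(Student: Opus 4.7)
The proof rests on the locality of one-sided Turing machine computation: the trajectory of a TM on any input is determined by the content of only those cells that its head actually visits. The plan is to make this precise, then read off the strip decomposition directly.

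First, I would fix a one-sided TM computing $f$. For each input $x$ on which the machine halts, let $L(x)$ be chosen so that the set of cells visited during the computation on $x$ is exactly $\{0,1,\ldots,L(x)-1\}$ (cell $0$ being the rightmost). A straightforward induction on the number of steps then shows that if $y$ is any input whose binary representation agrees with that of $x$ on positions $0,\ldots,L(x)-1$, then the computation on $y$ proceeds in lockstep with the computation on $x$: the configurations of head position, internal state, and all visible cell contents coincide, the same transition fires at every step, and in particular the head on input $y$ also never escapes the window $\{0,\ldots,L(x)-1\}$. Hence the output tapes coincide on that window, while the cells of $y$ in positions $\geq L(x)$ retain their original content, having never been touched.

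Translating this to numbers through the binary encoding, the inputs $y$ agreeing with $x$ in the lowest $L(x)$ bits form the arithmetic progression
\[
 S_x \;=\; \{\,r + 2^{L(x)}\cdot m \;:\; m\in\omega\,\},\qquad r = x\bmod 2^{L(x)},
\]
which is a strip in the sense of the definition. By the lockstep property, $f(y) = f(x) + (y-x)$ for all $y \in S_x$, so $f(y) = y + a_x$ with $a_x = f(x)-x$, and the restriction of $f$ to $S_x$ is therefore a strip of the function $a_x + x$. Since every halting input $x$ lies in its own strip $S_x$, the domain of $f$ is covered by such strips, and on each the function has the advertised form.

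The only nontrivial point is the locality induction itself, where one must simultaneously check the match of states and cell contents \emph{and} the fact that the head on input $y$ never ventures past position $L(x)-1$ into a cell where the two inputs may disagree. This is an easy induction once stated with the correct joint invariant, but it is the crux of the proof; the rest is bookkeeping translating between the tape picture and the binary number interpretation. No explicit partitioning is needed, because the theorem only asks that $f$ be a union of such strip-functions, which is exactly what the family $\{S_x\}$ provides.
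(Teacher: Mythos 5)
Your proof is correct and follows essentially the same route as the paper's: identify the leftmost visited cell, observe that all inputs agreeing on the visited window compute in lockstep, and read off the arithmetic progression $\{r + 2^{L}m\}$ on which $f(y)=y+(f(x)-x)$. You merely make explicit the locality induction and the reduction $r = x \bmod 2^{L(x)}$ that the paper's proof leaves implicit.
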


\begin{proof}
SSuppose $f$ halts on input $i$. Let $n$ be the left-most cell visited by the TM on input $i$. Clearly, changing the input on the left-hand side of this $n$-th cell will not alter the calculation. In ohter words, the calculation will be the same for all inputs of the form $x\cdot 2^{n+1} +i$. What will be the output for these respective inputs. Well, let $s_i = f(i)$, then the outputs for these infinitely many inputs will consist of this output $s_i$ together with the part of the tape that was unaltered by the computation. Thus, $f(x\cdot 2^{n+1} +i) = x\cdot 2^{n+1} + f(i) = x\cdot 2^{n+1} + i + (f(i)-i)$: behold our strip of the form $a+x$.   
\end{proof}

We can see the smallest elements that are defining a strip, the $i$ in the proof above, as sort of prime elements. The first calculation on a TM defines a strip. The next calculation on an input not already in that strip defines a new strip and so forth. Thus, the progressively defined strips define new prime elements and the way they do that is quite similar to Eratosthenes' Sieve. For various sieve-generated sets of numbers it is known that they tend to be distributed like the primes (\cite{wunderlich}) in that the prime elements will vanish and only occur with probability $\frac{1}{\log (x)}$.  If this would hold for the smallest elements of our strips too, in the limit, each element would belong with probability one to some previously defined strip. And each prime element defines some function of the form $x + a_i$. The contribution of the $a_i$ vanishes in the limit so that we end up with the identity function. In this sense, each function calculated by a one-sided TM would calculate the identity in the limit.

The Strips Theorem is bad for two reasons. Firstly, it shows that the class of computable functions is severely restricted. We even doubt that universal function can occur within this class of functions. And, if universal functions do occur, at the cost of how much coding is that the case. In other words, if possible at all, how strong should the coding mechanism be that is needed to represent a computable problem within the strips functions. Secondly, there is the problem of incomparability of functions computed in TM spaces. It is easily seen that Theorem \ref{theorem:Strips} generalizes to more colors $k$. Still, the strips will compute functions of the form $a + x$, however the strips themselves will be of the form $x\cdot k^{n+1} + i$. Thus, if at some stage the current project were to be extended and one wishes to study the functions that occur in spaces that use a different number of colors, by the Strips Theorem, this intersection is expected to be very small.

The following theorem tells us that the restriction on the class of computable functions when using $k$-ary input representation has nothing to do with the fact that the computation was done on a single sided TM and the same phenomena occurs in double-sided TMs.

\begin{theorem}
Let $f$ be a function that is calculated by a two sided TM. Then $f$ consists of strips of functions of the form $a + 2^l \cdot x$ with $l\in \mathbb{Z}$.
\end{theorem}

\begin{proof}
 xAs before the proof is based on the observation that altering the input on that part of the tape that was never visited will not influence the calculation. The only thing to be taken into account now is that the output can be shifted to the right (depending on conventions). So that in the end you see that the function, with $2^m$ units to the right correspond to $2^n$ units up, hence a tangent of $2^l$ for some $l\in \mathbb{Z}$.  
\end{proof}

On the basis of these considerations we decided to represent the input in unary. Moreover, from a theoretical viewpoint it is desirable to have the empty tape input different from the input zero, thus the final choice for our input representation is to represent the number $x$ by $x+1$ consecutive 1's.\\ 
\medskip

The way of representing the input in unary has two serious draw-backs:

\begin{enumerate}
\item 
The input is very homogeneous. Thus, it can be the case that TMs that expose otherwise very rich and interesting behavior, do not do so when the input consists of a consecutive block of 1's.

\item
The input is lengthy so that runtimes can grow seriously out of hand. See also our remarks on the cleansing process below.
\end{enumerate}

We mitigate these objections with the following considerations.
\begin{enumerate}
\item 
Still interesting examples are found. And actually a simple informal argument using the Church-Turing thesis shows that universal functions will live in a canonical way among the thus defined functions.
\item
The second objection is more practical and more severe. However, as the input representation is so homogeneous, often the runtime sequences exhibit so much regularity that missing values that are too large can be guessed. We shall do so as described in Subsection \ref{Section:Cleansing}.
\end{enumerate}

\subsection{Binary output convention} 
None of the considerations for the input conventions applies to the output convention. Thus, it is wise to adhere to an output convention that reflects as much information about the final tape-configuration as possible. Clearly, by interpreting the output as a binary string, from the output value the output tape configuration can be reconstructed. Hence, our outputs, if interpreted, will be so as binary numbers.

\begin{definition}[Tape Identity]\label{definition:TapeIdentity}
We say that a TM computes the \emph{tape identity} when the tape configuration at the end of a computation is identical to the tape configuration at the start of the computation.
\end{definition}

%
The output representation can be seen as a simple operation between systems, taking one representation to another. 
The main issue is, how does one keep the structure of a system when represented in another system, such that, moreover, no additional essential complexity is introduced.

For the tape identity, for example, one may think of representations that, when translated from one to another system, preserve the simplicity of the function. However, a unary input convention and a binary output representation immediately endows the tape identity with an exponential growth rate. In principle this need not be a problem. However, computations that are very close to the tape identity will give rise to number theoretic functions that are seemingly very complex.
However, as we shall see, in our current set-up there will be few occasions where we actually do interpret the output as a number other than for representational purposes. In most of the cases the raw tape output will suffice.

\subsection{The Halting Problem and Rice's theorem}
By the Halting Problem and Rice's theorem we know that it is in general undecidable to know wether a function is computed by a particular TM and whether two TMs define the same function. The latter is the problem of extensionality (do two TMs define the same function?) known to be undecidable by Rice's theorem. It can be the case that for TMs of the size considered in this paper, universality is not yet attained\footnote{Recent work (\cite{woodsNeary}) has shown some small two-way infinite tape universal TMs. It is known that there is no universal machine in the space of two-way unbounded tape (2,2) Turing machines but there is known at least one weakly universal Turing machine in (2,3)\cite{wolfram} and it may be (although unlikely) the case that a weakly universal Turing machine in (3,2) exists.}, that the Halting Problem is actually decidable in these small spaces and likewise for extensionallity. 

As to the Halting Problem, we simply say that if a function does not halt after 1000 steps, it diverges. Theory tells that the error thus obtained actually drops exponentially with the size of the computation bound \cite{calude} and we re-affirmed this in our experiments too as is shown in Figure \ref{figure:haltingProbDistr22}. After proceeding this way, we see that certain functions grow rather fast and very regular up to a certain point where they start to diverge. These obviously needed more than 1000 steps to terminate. We decided to complete these obvious non-genuine divergers manually. This process is referred to as \emph{cleansing} and shall be addressed with more detail in the next subsection.  

As to the problem of extensionality, we simply state that two TMs calculate the same function when they compute (after cleansing) the same outputs on the first 21 inputs 0 through 20 with a computation bound of 1000 steps. We found some very interesting observations that support this approach: for the (2,2) space the computable functions are completely determined by their behavior on the first 3 input values 0,1,2. For the $(3,2)$ space the first 8 inputs were found to be sufficient to determine the function entirely.

\subsection{Cleansing the data}\label{Section:Cleansing}

As mentioned before, the Halting problem is undecidable so one will always err when mechanically setting a cut-off value for our computations. The choice that we made in this paper was as follows. We put the cut-off value at 1000. After doing so, we looked at the functions computed. For those functions that saw an initial segment with a very regular progression of runtimes, for example 16, 32, 64, 128, 256, 512, -1, -1, we decided to fill out the the missing values in a mechanized way. It is clear that, although better than just using a cut-off value, we will still not be getting all functions like this. Moreover, there is a probability that errors are made while filling out the missing values. However we deem the error not too significant, as we have a uniform approach in this process of filling out, that is, we apply the same process for all sequences, either in (2,2) or in (4,2) etc. Moreover, we know from theory (\cite{calude}) that most TMs either halt quickly or never halt at all and we affirmed this experimentally in this paper. Thus, whatever error is committed, we know that the effect of it is eventually only marginally. In this subsection we shall describe the way we mechanically filled out the regular progressions that exceeded the computation bound.
%

We wrote a so-called predictor program that was fed incomplete sequences and was to fill out the missing values.
The predictor program is based on the function \texttt{FindSequenceFunction}\footnote{\texttt{FindSequenceFunction} takes a finite sequence of integer values $\{a_1, a_2, \ldots \}$ and retrieves a function that yields the sequence $a_n$. 
It works by finding solutions to difference equations represented by  the expression \texttt{DifferenceRoot} in \emph{Mathematica}. By default, \texttt{DifferenceRoot} uses early elements in the list to find candidate functions, then validates the functions by looking at later elements. \texttt{DifferenceRoot} is generated by functions such as \texttt{Sum}, \texttt{RSolve} and \texttt{SeriesCoefficient}, also defined in \emph{Mathematica}. \texttt{RSolve} can solve linear recurrence equations of any recurring order with constant coefficients. It can also solve many linear equations (up to second  recurring order) with non-constant coefficients, as well as many nonlinear equations. For more information we refer to the extensive online \emph{Mathematica} documentation.} built-in to the computer algebra system \emph{Mathematica}. Basically, it is not essential that we used \texttt{FindSequence\\
Function} or any other intelligent tool for completing sequences as long as the cleansing method for all TM spaces is applied in the same fashion. A thorough study of the cleansing process, its properties, adequacy and limitations is presented in \cite{ZJS2010}.
%
The predictor pseudo-code is as follows:

\begin{enumerate}
\item
Start with the finite sequence of integer values (with -1 values in the places the machine didn't halt for that input index).
\item
Take the first n consecutive non-divergent (convergent) values, where $n \geq 4$ (if there is not at least a segment with 4 consecutive non divergent values then it gives up).
\item
Call \texttt{FindSequenceFunction} with the convergent segment and the first divergent value.
\item
Replace the first divergent value with the value calculated by evaluating the function found by \texttt{FindSequenceFunction} for that sequence position.
\item
If there are no more -1 values stop otherwise trim the sequence to the next divergent value and go to 1.
\end{enumerate}
This is an example of a (partial) completion: Let's assume one has a sequence (2, 4, 8, 16, -1, 64, -1, 257, -1, -1) with 10 values. The predictor returns: (2, 4, 8, 16, 32, 64, 128, 257, -1, -1) because up to 257 the sequence seemed to be $2^n$ but from 257 on it was no longer the case, and the predictor was unable to find a sequence fitting the rest.

The prediction function was constrained by 1 second, meaning that the process stops if, after a second of trying, no prediction is made, leaving the non-convergent value untouched. This is an example of a completed Turing machine output sequence. Given (3, 6, 9, 12, -1, 18, 21, -1, 27, -1, 33, -1) it is retrieved completed as (3, 6, 9, 12, 15, 18, 21, 24, 27, 30, 33, 36). Notice how the divergent values denoted by $-1$ are replaced with values completing the sequence with the predictor algorithm based in \emph{Mathematica's} \texttt{FindSequenceFunction}. 


%
\subsubsection{The prediction vs. the actual outcome}

For a prediction to be called successful we require that the output, runtime and space usage sequences coincide in every value with the step-by-step computation (after verification). One among three outcomes are possible:

\begin{itemize}
\item Both the step-by-step computation and the sequences obtained with \texttt{FindSequenceFunction} completion produce the same data, which leads us to conclude that the prediction was accurate.
\item The step-by-step computation produces a non-convergent value $-1$, meaning that after the time bound the step-by-step computation didn't produce any new convergent value that wasn't also produced by the FindSequeceFunction (which means that either the value to be produced requires a larger time bound, or that the\\
\texttt{FindSequenceFunction} algorithm has failed, predicting a convergent value where it is actually divergent).
\item The step-by-step computation produced a value that the \\
\texttt{FindSequenceFunction} algorithm did not predict.
\end{itemize}

In the end, the predictor indicated what machines we had to run for larger runtimes in order to complete the sequences up to a final time bound of $200\,000$ steps for a subset of machines that couldnÕt be fully completed with the predictor program.
The number of incorrectly predicted (or left incomplete) in (3,2) was 90 out of a total 3368 sequences completed with the predictor program. In addition to these 45 cases of incorrect completions, we found 108 cases where the actual computation produced new convergent values that the predictor could not predict. The completion process led us to only eight final non-completed cases, all with super fast growing values.

In (4,2) things werenÕt too different. Among the $30\,955$ functions that were sampled motivated by the functions computed in (3,2) that were found to have also been computed in (4,2) (having in mind a comparison of time complexity classes) only 71 cases could not be completed by the prediction process, or were differently computed by the step-by-step computation. That is only 0.00229 of the sequences, hence in both cases allowing us to make accurate comparisons with low uncertainty in spite of the Halting Problem and the problem of very large (although rare) halting times.

\subsection{Running the experiment}

To explore the different spaces of TMs we wrote a TM simulator in the programming language C. 
We tested this C language simulator against the \texttt{TuringMachine} function
in \emph{Mathematica} as it used the same encoding for TMs. It was checked and found in concordance for the whole (2,2) space and a sample of the (3,2) space.

  We run the simulator in the cluster of the CICA (Centro de
  Inform\'atica Cient\'{\i}fica de Andaluc\'{\i}a\footnote{Andalusian
    Centre for Scientific Computing: \texttt{http://www.cica.es/}.}). To explore the (2,2)
  space we used only one node of the cluster and it took 25
  minutes. The output was a file of 2 MB. For (3,2) we used 25
  nodes (50 microprocessors) and took a mean of three hours in each
  node. All the output files together fill around 900 MB.

\section{Investigating the space of 2-states, 2-colors Turing machines}\label{Section:22}

In this section we shall have our first glimpse into the fascinating micro-cosmos of small Turing machines. We shall see what kind of computational behavior is found among the functions that live in (2,2) and reveal various complexity-related properties of the (2,2) space.

\begin{definition}
In our context and in the rest of this paper, an \emph{algorithm} computing a function is one particular set of 21 quadruples of the form $$\langle \mbox{input value}, \mbox{output value}, \mbox{runtime}, \mbox{space usage}\rangle$$ for each of the input values $0, 1, \ldots, 20$, where the output, runtime and space-usage correspond to that particular input.
\end{definition}

In the cleansed data of (2,2) we found 74 functions and a total of 138 different algorithms computing them.

\subsection{Determinant initial segments}

An indication of the complexity of the (2,2) space is the number of inputs needed to determine a function. In the case of (2,2) this number of inputs is only 3. For the first input, the input 0, there are 11 different outputs. The following list shows these different outputs (first value in each pair) and the frequency they appear with
(second value in each pair). Output \texttt{-1} represents the divergent one:

\begin{verbatim} 
{{3, 13}, {2, 12}, {-1, 10}, {0, 10}, {1, 10},
 {7, 6}, {6, 4}, {15, 4}, {4, 2}, {5, 2}, {31, 1}}
\end{verbatim}

For two inputs there are 55 different combinations and for three we
find all the 74 functions. The first input is most significant; without it,
the other inputs only appear in 45 different combinations. This is
because there are many functions with different behavior for the
first input than for the rest.

We find it interesting that only 3 values of a TM are needed to fully determine its behavior in the full (2,2) space that consists of 4\,096 different TMs. Just as a matter of analogy we bring the $\mathbf{C}^{\infty}$ functions to mind. These infinitely often differentiable continuous functions are fully determined by the outputs on a countable set of input values. It is an interesting question how the minimal number of input values needed to determine a TM grows relative to the total number of $(2 \cdot s \cdot k)^{s\cdot k}$ many different TMs in $(s,k)$  space, or relative to the number of defined functions in that space.

\subsection{Halting probability} In the cumulative version of Figure~\ref{figure:haltingProbDistr22} we see that more than 63\% of executions stop after 50 steps, and little growth is obtained after more
steps. Considering that there is an amount of TMs that never halt, it
is consistent with the theoretical result in \cite{calude} that most TMs stop
quickly or never halt.

Let us briefly comment on Figure~\ref{figure:haltingProbDistr22}. First of all, we stress that the halting probability ranges over all pairs of TMs in (2,2) and all inputs between 0 and 20. Second, it is good to realize that the graph is some sort of best fit and leaves out zero values in the following sense. It is easy to see that on the one-sided TM halting can only occur after an odd number of steps. Thus actually, the halting probability of every even number of steps is zero. This is not so reflected in the graph because of a smooth-fit.

We find it interesting that Figure~\ref{figure:haltingProbDistr22} shows features reminiscent of phase transitions. Completely contrary to what we would have expected, these ``phase transitions" were even more pronounced in $(3,2)$ as one can see in Figure \ref{fig:runProb32}.

\begin{figure}
\begin{center}
  \includegraphics[width=10.7cm]{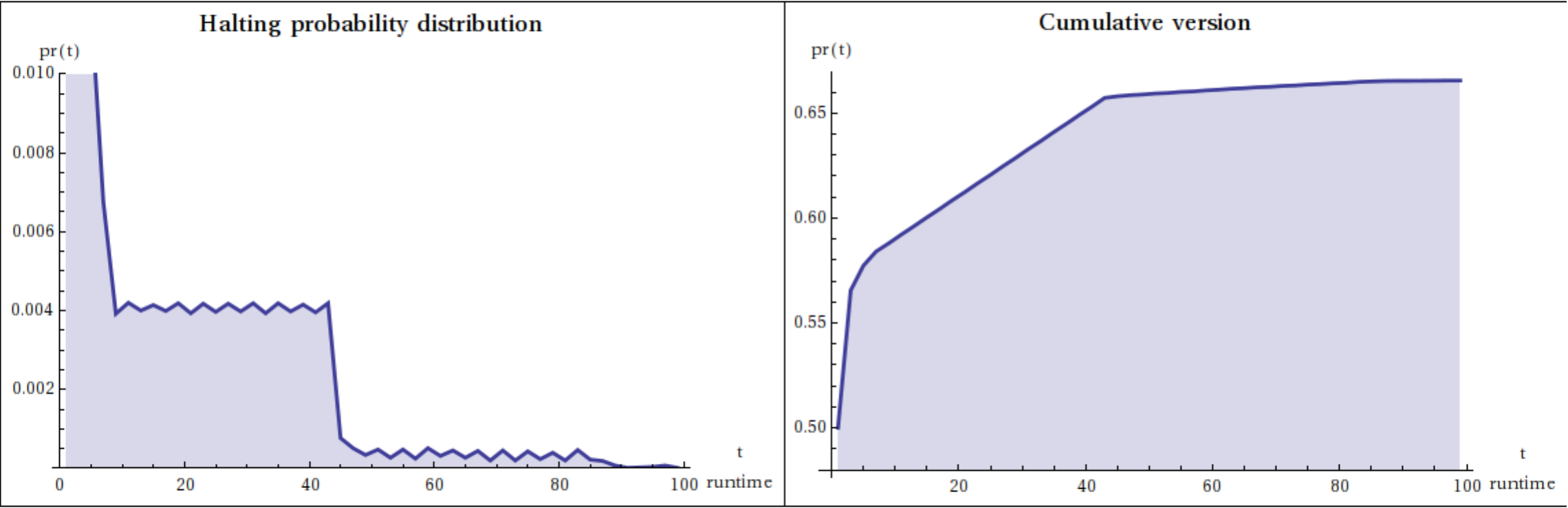}
  \caption{Halting times in (2,2).}\label{figure:haltingProbDistr22}
\end{center}
\end{figure}

\subsection{Phase transitions in the halting probability distribution}\label{Section:PhaseTransitions}

Let consider Figure \ref{figure:haltingProbDistr22} again. Note that in this figure only pairs of TMs and inputs are considered that halt in at most 100 steps.
The probability of stopping (a random TM in $(2,2)$ with a random input in
0 to 20) in at most 100 steps is $0.666$. The probability of stopping
in any number of steps is $0.667$, so most TMs stop quickly of do not
stop.

We clearly observe a phase transition phenomenon. To investigate the cause of this, let us
consider the set of runtimes and the number of their
occurrences. Figure~\ref{fig:occrun} shows at the left the 50 smallest
runtimes and the number of occurrences in the space that we have
explored.
\begin{figure}[htbp!]
  \centering
  \includegraphics[width=10.7cm]{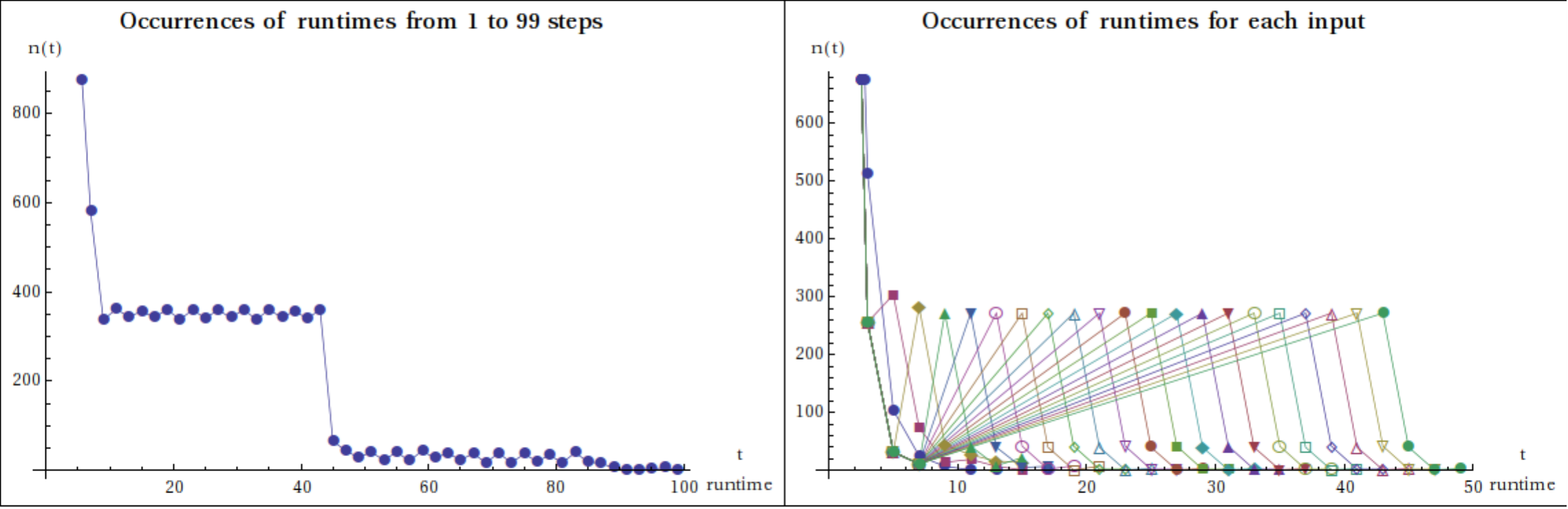}
  \caption{Occurrences of runtimes}
  \label{fig:occrun}
\end{figure}
The phase-transition is apparently caused because there are some
blocks in the runtimes. 
To study the cause of this phase-transition we should observe that the
left diagram on Figure~\ref{fig:occrun} represents the occurrences of
runtimes for arbitrary inputs from 0 to 20. The graph on the right
of Figure~\ref{fig:occrun} is clearer. Now, lines correspond to different inputs
from 0 to 20.
The graph at the left can be obtained from
the right one by adding the occurrences corresponding to points with
the same runtime.  The distribution that we observe here explains the
phase-transition effect. It's very interesting that in all cases there
is a local maximum with around 300 
occurrences
and after this maximum, the evolution is
very similar. In order to explain this, we look at the following
list\footnote{The dots denote a linear progression (or constant which is a special case of linear).} that represents the 10 most frequent runtime sequences in
$(2,2)$. Every runtime sequence is preceded by the number of TMs computing it:
%
\begin{center}
\footnotesize{}
\begin{tabular}{l@{\,\,}l@{\quad\quad}l@{\,\,}l@{\quad\quad}l@{\,\,}l}
2048 &\{1, 1, \ldots\} & 106 & \{-1, 3, 3, \ldots\} & 20 & \{3, 7, 11,
15, \ldots\}\\ 
1265 & \{-1, -1, \ldots\} & 76 &\{3, -1, -1, \ldots\}  & 20 & \{3, 5,
5, \ldots\}\\ 
264 & \{3, 5, 7, 9, \ldots\} & 38 & \{5, 7, 9, 11, \ldots\}\\
112 & \{3, 3, \ldots\} & 32 & \{5, 3, 3, \ldots\} \\

\end{tabular}
\normalsize{}
\end{center}

We observe that there are only 5 sequences computed more than 100
times. They represent 92.65\% of the TMs in $(2,2)$. There is only one
sequence that is not constant nor divergent (recall that $-1$ represents
divergences) with 264 occurrences: $\{3,5,7,9,\ldots\}$. That runtime
sequence corresponds to TMs that give a walk forth and back over the input tape to run of the tape and halt. This is the most trivial linear sequence and explains the intermediate step in the phase-transition effect. There is also
another similar sequence with 38 occurrences
$\{5,7,9,11,\ldots\}$. Moreover, observe that there is a sequence with
20 occurrences where subsequent runtimes differ by 4 steps. This sequence
$\{3,7,11,15,\ldots\}$ contains alternating values of our original one
$\{3,5,7,9,\ldots\}$ and it explains the zigzag observed in the left
part of Figures~\ref{figure:haltingProbDistr22} and~\ref{fig:occrun}.

Altogether, this analysis accounts for the observed phase transition. In a sense, the analysis reduces the phase transition to the strong presence of linear performers in Figure \ref{timecomp1} together with the facts that on the one hand there are few different kinds of linear performers and on the other hand that each group of similar linear TMs is ``spread out over the horizontal axis'' in Figure \ref{figure:haltingProbDistr22} as each input 0,\ldots, 20 is taken into account.
%
%
%

%
%
%

\subsection{Runtimes}
There is a total of 49 different sequences of runtimes in
(2,2). This number is 35 when we only consider total
functions. Most of the runtimes grow linear with the size of the
input. A couple of them grow quadratically and just two grow
exponentially. The longest halting runtime occurs in TM numbers 378
and 1351, that run for 8\,388\,605 steps on the last input, that is on
input 20. Both TMs used only 21 cells\footnote{It is  an interesting question how many times each cell is visited. Is the distribution uniform over the cells? Or centered around the borders?} for their computation and outputted the value 2\,097\,151.

%
%

Rather than exposing lists of outputvalues we shall prefer to graphically
present our data. The sequence of output values is graphically represented as
follows. On the top line we depict the tape output on input zero
(that is, the input consisted of just one black cell). On the second
line immediately below the first one, we depict the tape output on input one (that is, the input
consisted of two black cells), etc. By doing so, we see that the
function computed by TM 378 is just the tape identity.

Let us focus on all the (2,2) TMs that compute that tape
identity. We will depict most of the important information in one
overview diagram. This diagram as shown in
Figure~\ref{figure:OverviewTapeIdentity} contains at the top a
graphical representation of the function computed as described above.

Below the representation of the function, there are six graphs. On
each horizontal axis of these graphs, the input is plotted. The
$\tau_i$ is a diagram that contains plots for all the runtimes of all
the different algorithms computing the function in question. Likewise,
$\sigma_i$ depicts all the space-usages occurring. The ${<} \tau{>}$ and ${<} \sigma{>}$ refer to the
(arithmetical) average of time and space usage. The
subscript $h$ in e.g. ${<}\tau{>}_h$ indicates that the harmonic average is calculated. As
the harmonic average is only defined for non-zero numbers, for
technical reasons we depict the harmonic average of $\sigma_i +2$
rather than for $\sigma_i$.

Let us recall a definition of the harmonic mean. The harmonic mean of $n$ non-zero values $x_1,\ldots, x_n$ is defined as 
\[
{<}x{>}_h := \frac{n}{\frac{1}{x_1}+ \ldots + \frac{1}{x_n}}.
\]
In our case, the harmonic mean of the runtimes can be interpreted as follows. Each TM computes the same function. Thus, the total amount of information in the end computed by each TM per input is the same although runtimes may be different. Hence the runtime of one particular TM on one particular input can be interpreted as time/information. We now consider the following situation:

Let the exhaustive list of TMs computing a particular function $f$ be \{$TM_1$, \dots, $TM_n$ with runtimes $t_1,\\ \ldots, t_n$\}.
If we normalize the amount of information computed by $f$ to 1, we can interpret e.g. $\frac{1}{t_k}$ as the amount of information computed by TM$_k$ in one time step.
If we now let $\mbox{TM}_1$ run for 1 time unit, next $\mbox{TM}_2$ for 1 time unit and finally $\mbox{TM}_n$ for 1 time unit, then the total amount of information of the output computed is $1/t_1 + ... + 1/t_n$. Clearly, 
\[
\stackrel{n \mbox{ \small{times}}}{\overbrace{\frac{1}{{<}\tau{>}_h}+\ldots+\frac{1}{{<}\tau{>}_h}}}=\stackrel{n \mbox{ \small{times}}}{\overbrace{\frac{\frac{1}{t_1}+\ldots+\frac{1}{t_n}}{n}+\ldots\frac{\frac{1}{t_1}+\ldots+\frac{1}{t_n}}{n}}} = \frac{1}{t_1}+\ldots+\frac{1}{t_n}.
\] 
Thus, we can see the harmonic average as the time by which the typical amount of information is gathered on a random TM that computes $f$. Alternatively, the harmonic average ${<}\tau{>}_h$ is such that $\frac{1}{{<}\tau{>}_h}$ is the typical amount of information computed  in one time step on a random TM that computes $f$.

\begin{figure}
  \begin{multicols}{2}
    The image provides the basic information of the TM outputs depicted by a diagram with each row the output of each of the 21 inputs, followed by the plot figures of the average resources taken to compute the function, preceded by the time and space plot for each of the algorithm computing the function. For
    example, this info box tells us that there are 1\,055
    TMs computing the identity function, and that these TMs are distributed over just 12
    different algorithms (i.e. TMs that take different space/time resources).  Notice that at first glance at the
    runtimes $\tau_i$, they seem to follow just an exponential
    sequence while space grows linearly. However, from the other
    diagrams we learn that actually most TMs run in constant time and
    space. Note that all TMs that run out of the tape in the first
    step without changing the cell value (the 25\% of the total space)
    compute this function.

    \begin{center}
      \includegraphics[height=6cm]{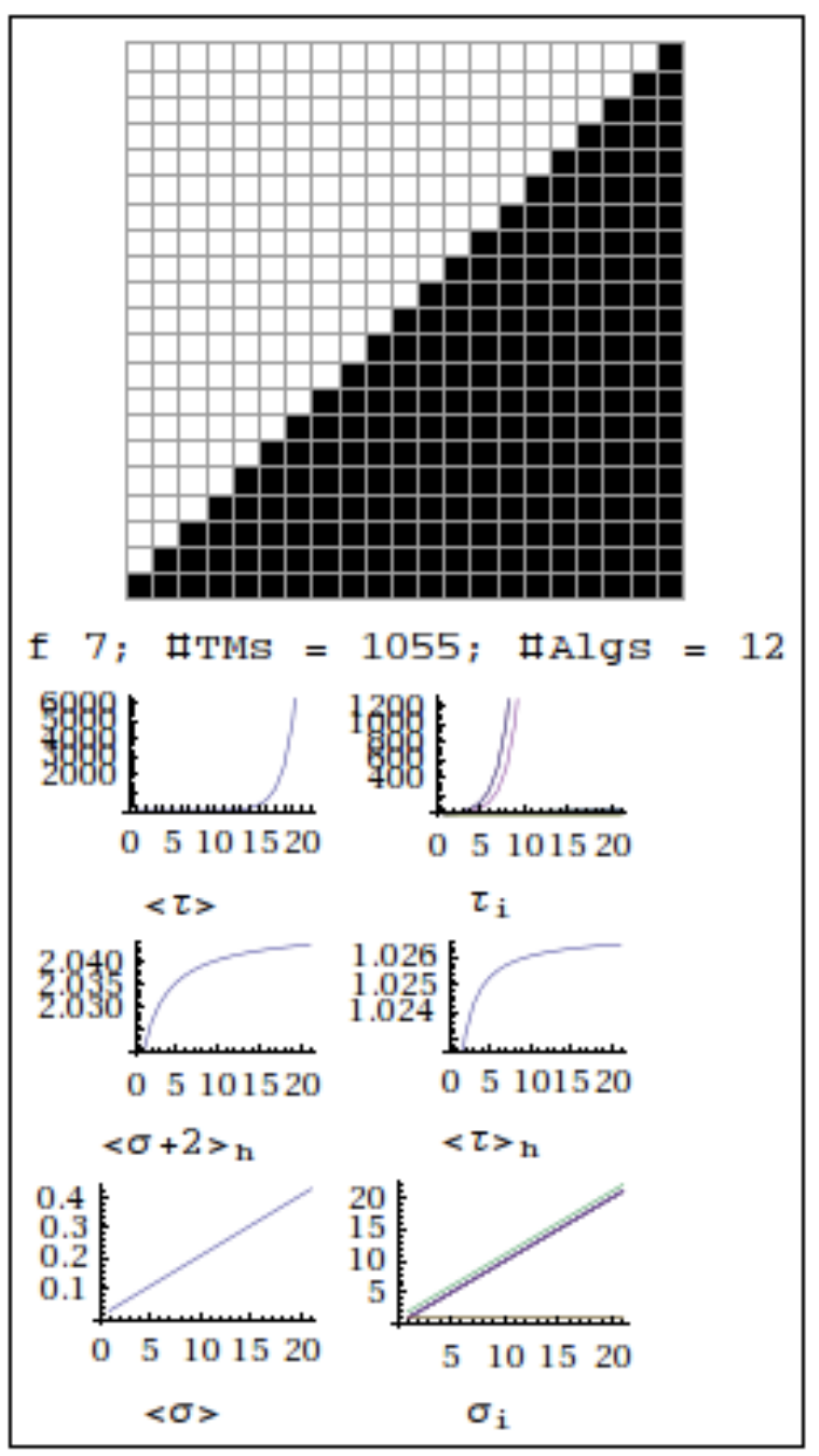}\\
    \end{center}
  \end{multicols}
  \caption{Overview diagram of the tape identity.}
  \label{figure:OverviewTapeIdentity}
\end{figure}

\subsection{Clustering in runtimes and space-usages}
Observe the two graphics in Figure~\ref{fig:SpaceTime22}.
\begin{figure}[htb!]
  \centering
    \includegraphics[width=5.5cm]{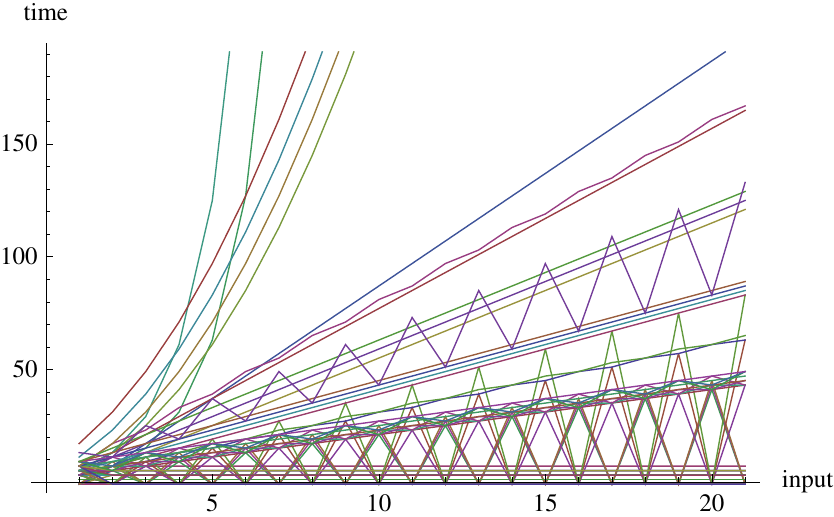}
    \includegraphics[width=5.5cm]{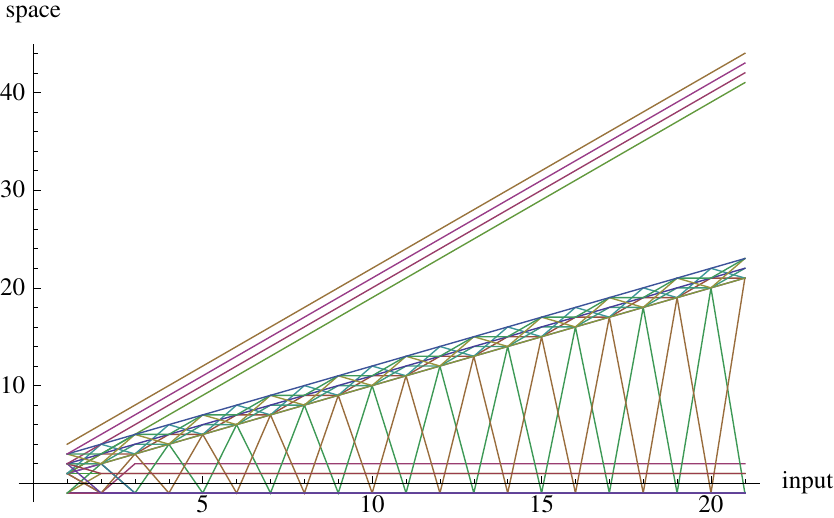}  
  \caption{Runtime and space distribution in (2,2).}
  \label{fig:SpaceTime22}
\end{figure}
The left one shows all the runtime sequences in (2,2) and the
right one the used-space sequences. Divergences are represented by $-1$, so
they explain the values below the horizontal axis. We find some
exponential runtimes and some quadratic ones, but most of them remain linear. All space usages in (2,2) are linear.

An interesting feature of Figure~\ref{fig:SpaceTime22} is the clustering. For example, we see that the space usage comes in three different clusters. 
The clusters are also present in the time graphs. Here the clusters are less prominent as there are more runtimes and the clusters seem to overlap. It is tempting to think of this clustering as rudimentary manifestations of the computational complexity classes.

Another interesting phenomenon is observed in these graphics. It is that of
alternating divergence, detected in those cases where value $-1$
alternates with the other outputs, spaces or runtimes. The phenomena of alternating divergence is also manifest in the study of definable sets.

\subsection{Definable sets}
Like in classical recursion theory, we say that a set $W$ is definable by a (2,2) TM if there is some machine $M$ such that $W=W_M$ where $W_M$ is defined as usual as
\[
W_M := \{x \mid M(x) \downarrow  \}.
\]
In total, there are 8 definable sets in (2,2). Below follows an enumeration of them.
{\small \begin{verbatim}
{{}, {0, 1, 2, 3, 4, 5, 6, 7, 8, 9, 10, 11, 12, 13, 14,
 15, 16, 17, 18, 19, 20}, {0}, {0, 2, 4, 6, 8, 10, 12, 14,
 16, 18, 20}, {1, 2, 3, 4, 5, 6, 7, 8, 9, 10, 11, 12, 13, 
 14, 15, 16, 17, 18, 19, 20}, {2, 3, 4, 5, 6, 7, 8, 9, 10, 
 11, 12, 13, 14, 15, 16, 17, 18, 19, 20}, {1, 3, 5, 7, 9,
 11, 13, 15, 17, 19}, {0, 1}}
\end{verbatim}}
\noindent
It is easy to see that the definable sets are closed under complements.


\subsection{Clustering per function}
We have seen that all runtime sequences in (2,2) come in clusters and likewise for the space usage. It is an interesting observation that this clustering also occurs on the level of single functions. Some examples are reflected in Figure~\ref{figure:clusteringPerFunction}.

\begin{figure}
\begin{center}
\includegraphics[width=10cm]{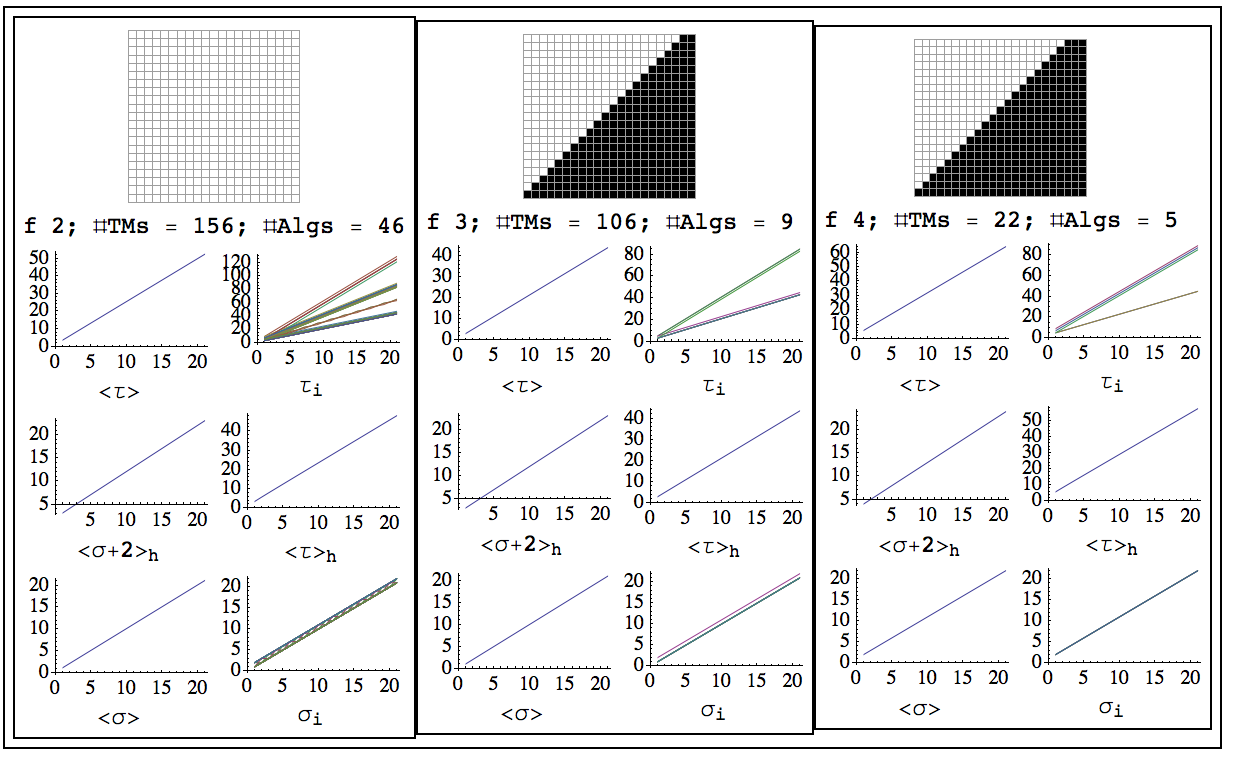}
\caption{Clustering of runtimes and space-usage per function.}\label{figure:clusteringPerFunction}
\end{center}
\end{figure}

\subsection{Computational figures reflecting the number of available resources}
Certain functions clearly reflect the fact that there are only two available states. This is particularly noticeable from the period of alternating converging and non-converging values and in the offset of the growth of the output, and in the alternation period of black and white cells. Some examples are included in Figure~\ref{figure:numberOfStatesManifest}.

\begin{figure}
\begin{center}
\includegraphics[width=10.7cm]{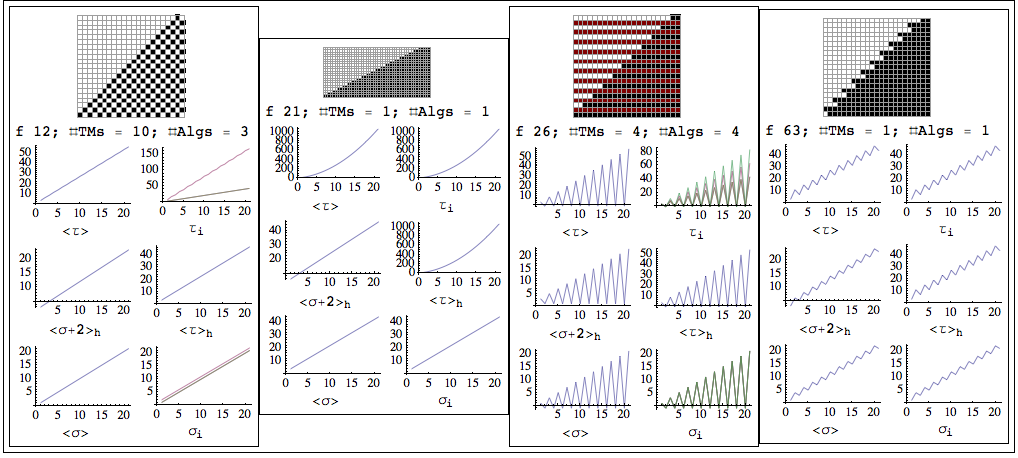}
\caption{Computational figures reflecting the number of available resources.}\label{figure:numberOfStatesManifest} 
\end{center}
\end{figure}

\subsection{Types of computations in (2,2)}

Let us finish this analysis with some comments about the computations
that we can find in (2,2). Most of the TMs perform very simple
computations. Apart from the 50\% that in every space finishes the
computations in just one step (those TMs that move to the right from the
initial state), the general pattern is to make just one round through
the tape and back. It is the case for TM number 2240 with the sequence
of runtimes: 
\begin{verbatim}
{5, 5, 9, 9, 13, 13, 17, 17, 21, 21, ..}
\end{verbatim}
Figure~\ref{fig:tape2240} shows the sequences of tape configurations
for inputs 0 to 5. Each of these five diagrams should be interpreted as follows. The top line represents the tape input and each subsequent line below that represents the tape configuration after one more step in the computation.
\begin{figure}[htb!]
  \centering
  \includegraphics[width=5.8cm]{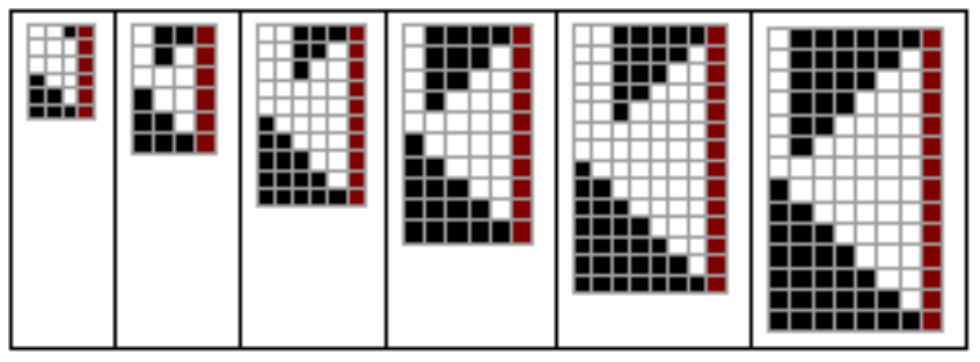}
  \caption{Turing machine tape evolution for Rule 2240.}
  \label{fig:tape2240}
\end{figure}


 The walk around the tape can be more
complicated.  This is the case for TM number 2205 with the runtime
sequence:  
\begin{verbatim}
{3, 7, 17, 27, 37, 47, 57, 67, 77, ...}
\end{verbatim}
which has a greater runtime but it only uses that part of the tape that was given as input, as we can
see in the computations (Figure~\ref{fig:tape2205and1351}, left). 
TM 2205 is interesting in that it shows a clearly localized and propagating pattern that contains the essential computation.

\begin{figure}[htb!]
  \centering
  \includegraphics[width=5cm]{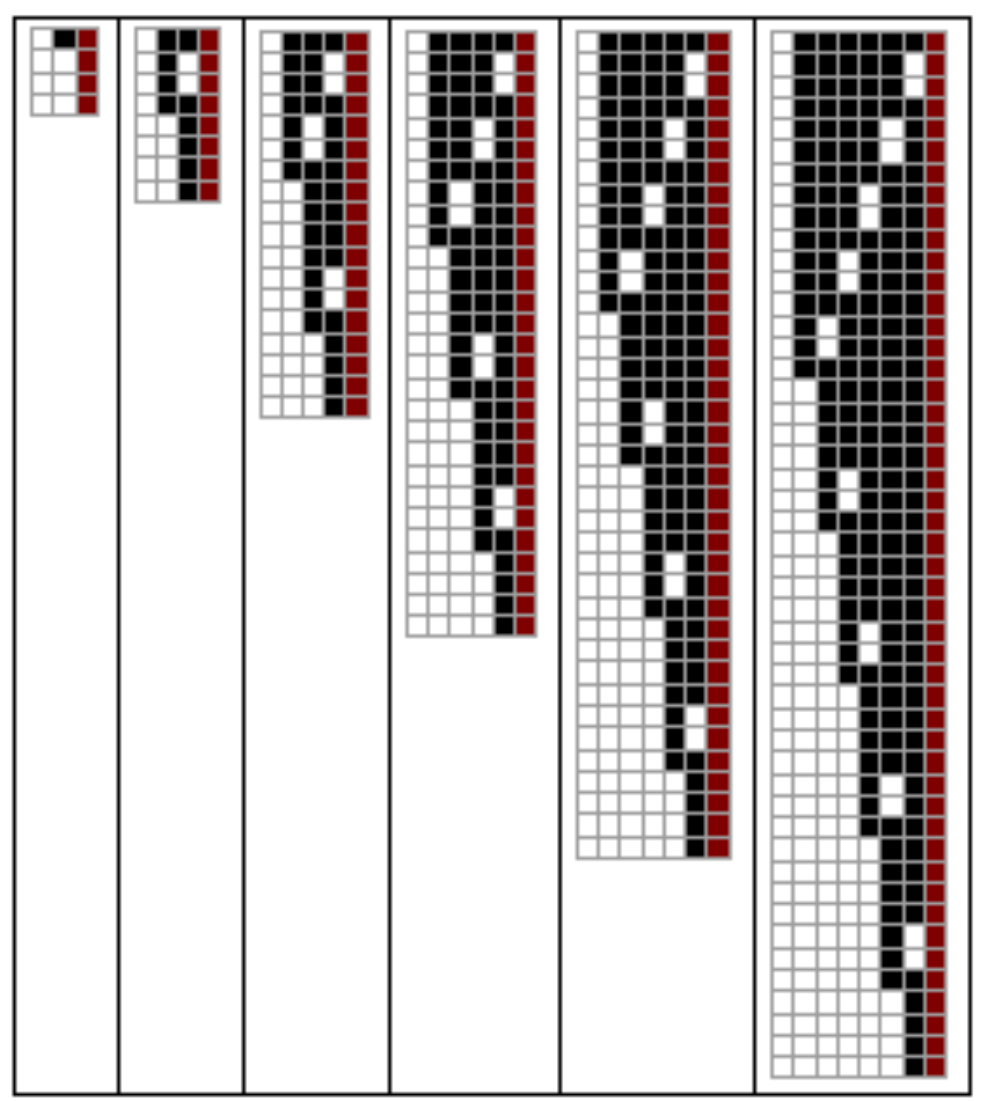}
  \includegraphics[width=3.2cm]{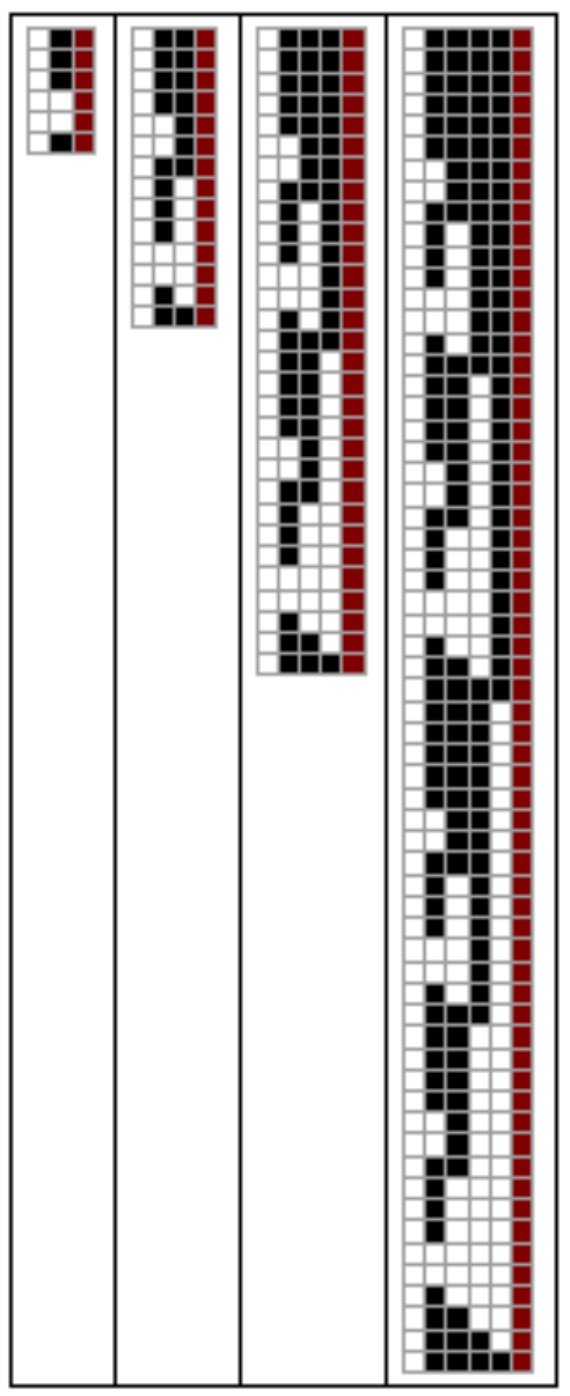}
  \caption{Tape evolution for Rules 2205 (left) and 1351 (right).}
  \label{fig:tape2205and1351}
\end{figure}

The case of TM 1351 is one of the few that escapes from this simple
behavior. As we saw, it has the highest runtimes in
(2,2). Figure~\ref{fig:tape2205and1351} (right) shows its tape
evolution. Note that it is computing the tape identity. Many other TMs in
(2,2) compute this function in linear or constant time. 
In this case of TM 1351 the pattern is generated by a genuine recursive process thus explaining the exponential runtime.

In (2,2) we also witnessed TMs performing iterative computations that gave rise to mainly quadratic runtimes.
An example of this is TM 1447, whose computations for the first seven inputs are represented in Figure~\ref{Figure:IterativeComputation}.

\begin{figure}[htb!]
  \centering
  \includegraphics[width=10.7cm]{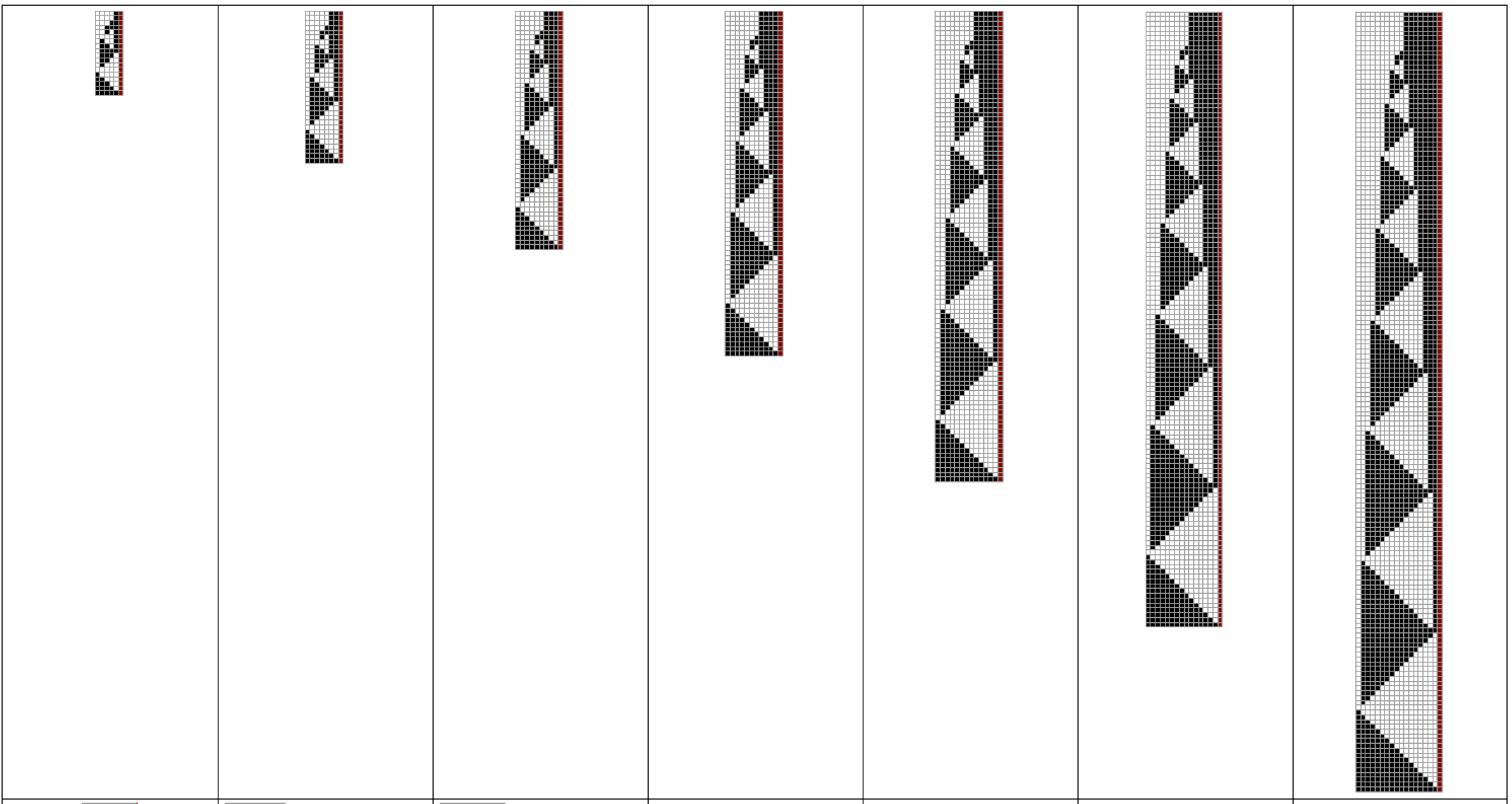}
  \caption{Turing machine tape evolution for Rule 1447.}
  \label{Figure:IterativeComputation}
\end{figure}

Let us briefly summarize the types of computations that we saw in (2,2).

\begin{itemize}
\item 
Constant time behavior like the head (almost) immediately dropping off the tape;

\item
Linear behavior like running to the end of the tape and then back again as Rule 2240;

\item
Iterative behavior like using each black cell to repeat a certain process as in Rule 1447;

\item
Localized computation like in Rule 2205;

\item
Recursive computations like in Rule 1351.

\end{itemize}

As most
of the TMs in (2,2) compute their functions in the easiest possible way (just
one crossing of the tape), no significant speed-up can be expected. Only
slowdown is possible in most cases.

\section{Investigating the space of 3-states, 2-colors Turing machines}\label{Section:32}

In the cleansed data of (3,2) we found 3886  functions and a total of 12824  different algorithms that computed them.

\subsection{Determinant initial segments}

As these machines are more
complex than those of (2,2), more outputs are needed to
characterize a function. From 3 required in (2,2) we need now
8, see Figure~\ref{fig:grphn}.
\begin{figure}[htb!]
  \centering
  \includegraphics[width=6cm]{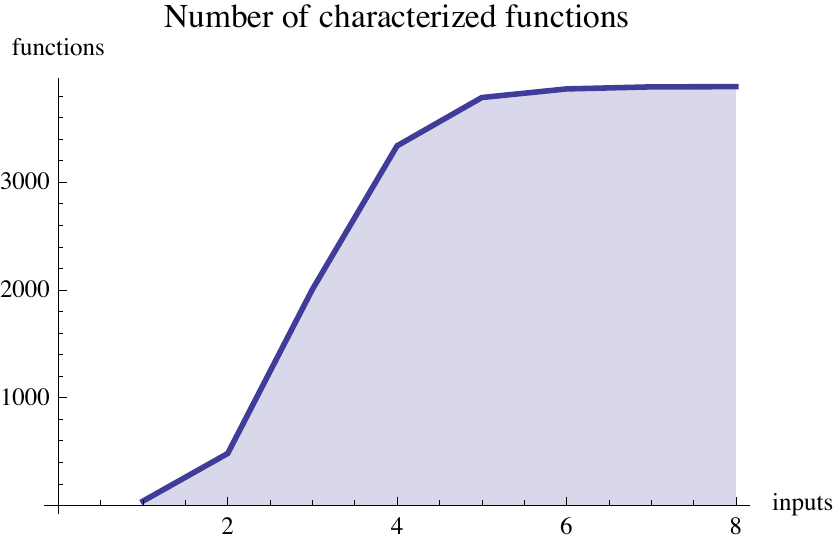}
  \caption{Number of outputs required to characterize a function in (3,2).}
  \label{fig:grphn}
\end{figure}

\subsection{Halting probability}

Figure~\ref{fig:runProb32} shows the runtime probability distributions
in (3,2). The same behavior that we commented for (2,2)
is also observed. 
\begin{figure}[htb!]
  \centering
  \includegraphics[width=10.1cm]{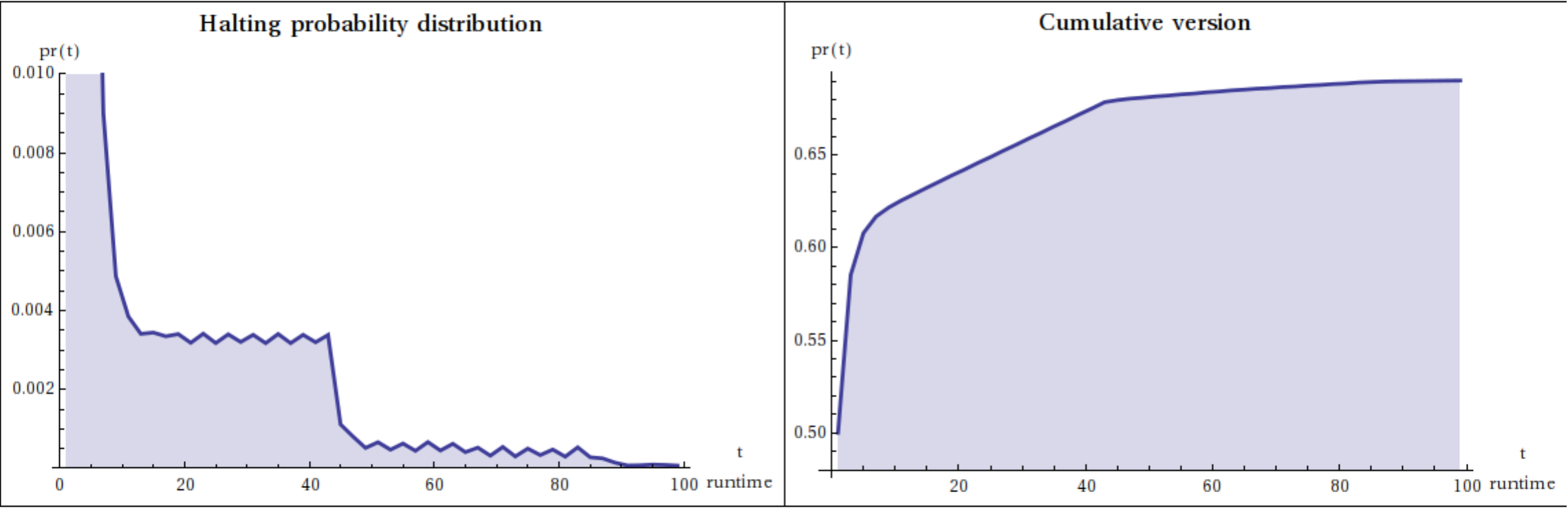}
  \caption{Runtime proprobability distributions in (3,2).}
  \label{fig:runProb32}
\end{figure}

Note that the ``phase transitions" in (3,2) are even more pronounced than in (2,2).
We can see these phase transitions as rudimentary manifestations of computational complexity classes.
Similar reasoning as in Subsection \ref{Section:PhaseTransitions} can be applied for (3,2) to account for the phase transitions as we can see in Figure~\ref{fig:occrun32}.

\begin{figure}[htb!]
  \centering
  \includegraphics[width=10.1cm]{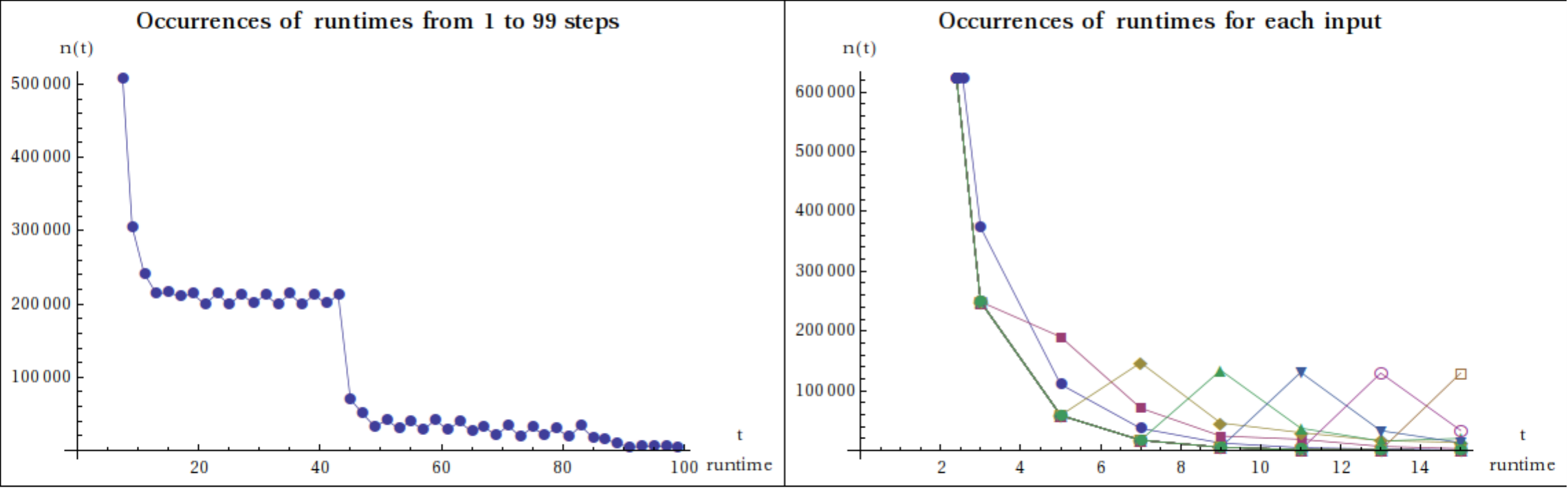}
\caption{Occurrences of runtimes}
  \label{fig:occrun32}
\end{figure}

\subsection{Runtimes and space-usages}

In (3,2) the number of different runtimes and space usage sequences is
the same: 3676. Plotting them all as we did for
(2,2) would not be too informative in this case. So, Figure~\ref{fig:SpaceTime32sample} shows samples of
50 sequences of space and runtime sequences. Divergent values are
omitted as to avoid big sweeps in the graphs caused by the alternating divergers. As in (2,2) we observe the same phenomenon of clustering. 
\begin{figure}[htb!]
  \centering
  \includegraphics[width=5cm]{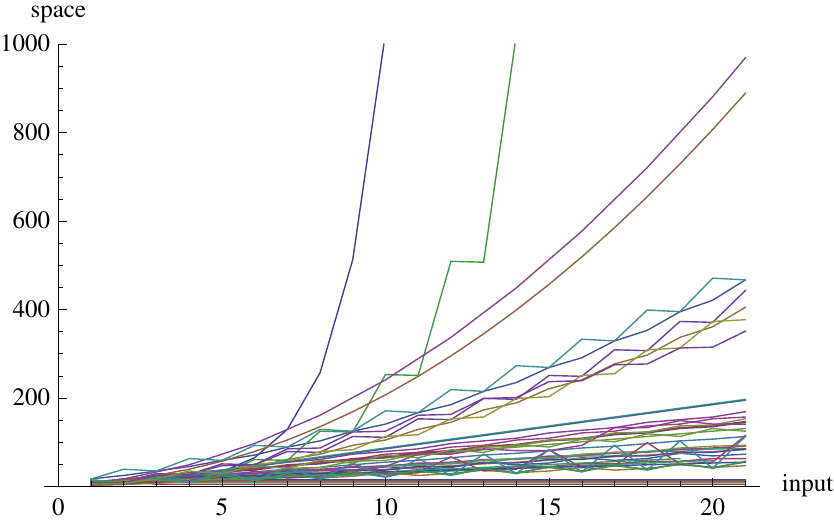}
  \includegraphics[width=5cm]{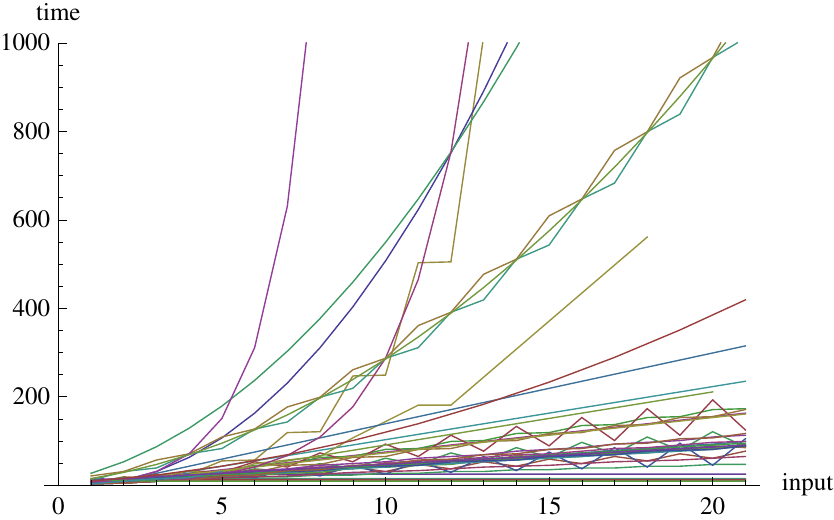}  
  \caption{Sampling of 50 space (left) and runtime (right) sequences
    in (3,2).} 
  \label{fig:SpaceTime32sample}
\end{figure}

\subsection{Definable sets}

Now we have found 100 definable sets. Recall that in (2,2)
definable sets were closed under taking complements. This does not happen
in (3,2). There are 46 definable sets, like
{\small \begin{verbatim}
{{}, {0}, {1}, {2}, {0, 1}, {0, 2}, {1, 2}, 
 {0, 1, 2}, ...}
\end{verbatim}}
\noindent
that coexist with their complements, but another 54, like
{\small \begin{verbatim}
{{0, 3}, {1, 3}, {1, 4}, {0, 1, 4}, {0, 2, 3}, 
 {0, 2, 4}, ...}
\end{verbatim}}
\noindent
are definable sets but their complements are not. 
We note that, although there are more definable sets in (3,2) in an absolute sense, the number of definable sets in (3,2) relative to the total amount of functions in (3,2) is about four times smaller than in (2,2).

\subsection{Clustering per function}

In (3,2) the same phenomenon of the clustering of runtime and
space usage within a single function also happens. 
\begin{figure}[htb!]
  \centering
  \includegraphics[width=9cm]{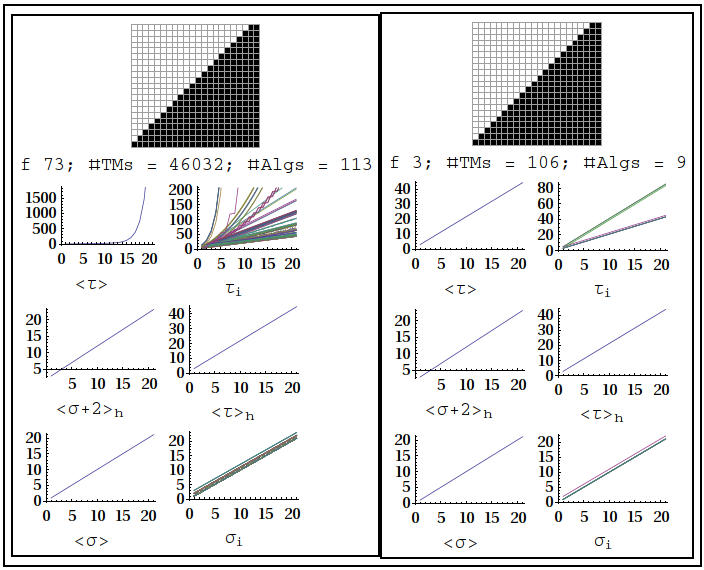}
  \caption{Clustering per function in (3,2).}
  \label{fig:clustFunc32}
\end{figure}
Moreover, as Figure~\ref{fig:clustFunc32} shows, exponential runtime sequences may
occur in a (3,2) function (left) while only linear behavior is present among the (2,2) computations of the
function (right). 

\subsection{Exponential behavior in (3,2) computations}

Recall that in (2,2) most convergent TMs complete their
computations in linear time. Now (3,2) presents more interesting
exponential behavior, not only in runtime but also in used space. 

The max runtime in (3,2) is 894\,481\,409 steps found in the TMs number
599063 and 666364 (a pair of twin rules\footnote{We call two rules in (3,2) \emph{twin rules} whenever they are exactly the same after switching the role of State 2 and State 3. })  at input
20. The values of this function are double exponential. All of them
are a power of 2 minus 2. 

%
%
%
%
%
Figure~\ref{fig:tape599} shows the tape evolution with
inputs 0 and 1. The pattern observed on the right repeats itself.
\begin{figure}[htb!]
  \centering
  \includegraphics[width=2.5cm]{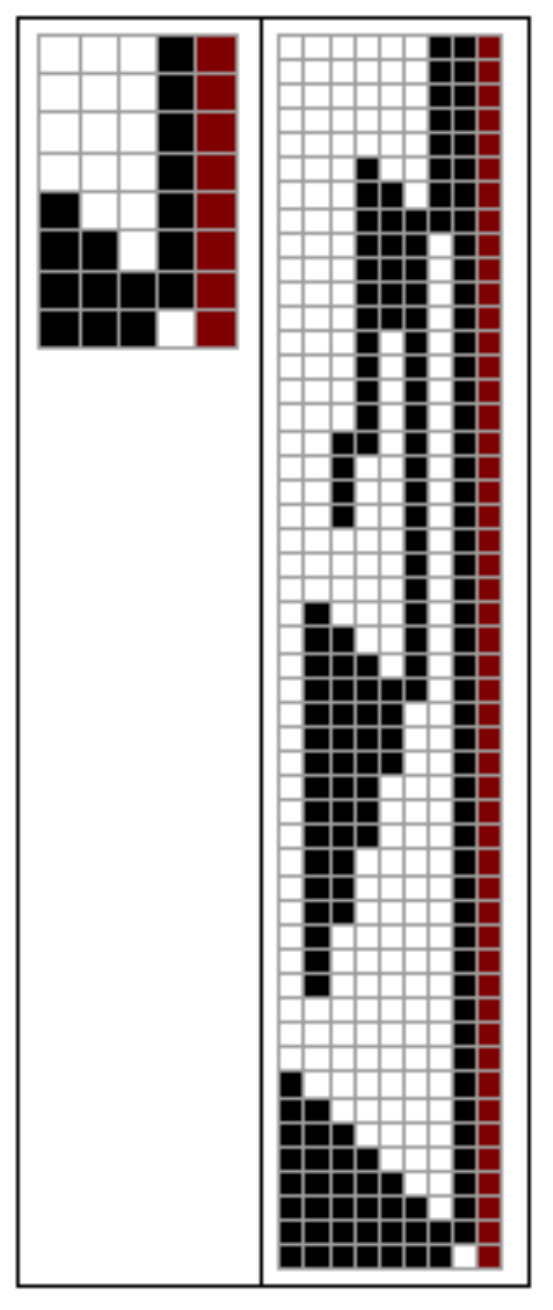}
  \caption{Tape evolution for Rule 599063.}
  \label{fig:tape599}
\end{figure}

\section{The space (4,2)}\label{Section:42}
An exhaustive search of this space fell out of the scope of the current project. For the sake of our investigations we were merely interested in finding functions in (4,2) that we were interested in. Thus, we sampled and looked only for interesting functions that we selected from (2,2) and (3,2). 
In searching the 4,2 space, we proceeded as follows. We selected 284 functions in (3,2), 18 of them also in (2,2), that we hoped to find in (4,2) using a sample of about 56x$10^6$ random TMs. 

Our search process consisted of generating random TMs and run them for 1000 steps, with inputs from 0 to 21. The output (with runtime and space usage) was saved only for those TMs with a converging part that matches some of the 284 selected functions. 

We saved 32235683 TMs. From these, 28032552 were very simple TMs that halt in just one step for every input, so we removed them. We worked with 4203131 non-trivial TMs. 

After cleansing there were 1549 functions computed by 49674 algorithms. From these functions, 22 are in (2,2) and 429 in (3,2). TMs computing all the 284 functions of the sampling were found.

Throughout the remainder of the paper it is good to constantly have in mind that the sampling in the (4,2) space is not at all representative.

\section{Comparison between the TM spaces}\label{Section:Comparing}

The most prominent conclusion from this section is that when computing a particular function, slow-down of a computation is more likely than speed-up if the TMs have access to more resources to perform their computations. Actually no essential speed-up was witnessed at all. We shall compare the runtimes both numerically and asymptotically.

\subsection{Runtimes comparison}

In this section we compare the types of runtime progressions we encountered in our experiment. We use the big $\mathcal{O}$ notation to classify the different types of runtimes. Again, it is good to bear in mind that our findings are based on just 21 different inputs. However, the estimates of the asymptotic behavior is based on the functions as found in the cleansing process and sanity checks on more inputs confirmed the correctness (plausibility) of those functions.

Below, a table is presented that compares the runtime behavior between functions that were present in (2,2), (3,2) and (4,2). The first column refers to a canonical index for this list of functions that live in all of (2,2), (3,2) and (4,2). The column under the heading (2,2) displays the distribution of time complexity classes for the different algorithms  in (2,2) computing the particular function in that row and likewise for the columns (3,2) and (4,2). Each time complexity class is followed by the number of occurrences among the algorithms in that TM space. The complexity classes are sorted in increasing order. Note that we only display a selection of the functions, but our selection is representative for the whole (2,2) space.

 {\tiny
\flushleft{
\[
\begin{array}{|c|c|c|c|}
\hline
\text{ ${\#}$} & {\text{(2,2)}} & \text{(3,2)} & \text{(4,2)} \\
\hline
1 & 
\begin{array}{lll}
\text{O}[1]:46 & \text{O}[n]:46 & \text{}
\end{array}
& 
\begin{array}{lll}
\text{O}[1]:1109 & \text{O}[n]:1429 \\
\text{O}\left[n^2\right]:7 & \text{O}\left[n^3\right]:1 & \text{} 
\end{array}
& 
\begin{array}{lll}
\text{O}[1]:19298 & \text{O}[n]:28269 \\
 \text{O}\left[n^2\right]:77 & \text{O}\left[n^3\right]:6 & \text{}
\end{array}
\\
\hline
2 & 
\begin{array}{lll}
\text{O}[1]:5 & \text{O}[n]:5 & \text{}
\end{array}
& 
\begin{array}{lll}
\text{O}[1]:73 & \text{O}[n]:64 \\
\text{O}\left[n^2\right]:7 & \text{O}[\text{Exp}]:4 & \text{}
\end{array}
& 
\begin{array}{lll}
\text{O}[1]:619 & \text{O}[n]:566 \\
\text{O}\left[n^2\right]:53 & \text{O}\left[n^3\right]:16 \\
\text{O}[\text{Exp}]:26 & \text{}
\end{array}
\\
\hline
3 & 
\begin{array}{lll}
\text{O}[1]:2 & \text{O}[n]:2 & \text{}
\end{array}
& 
\begin{array}{lll}
\text{O}[1]:129 & \text{O}[n]:139 \\
 \text{O}\left[n^2\right]:2 & \text{}
\end{array}
& 
\begin{array}{lll}
\text{O}[1]:2483 & \text{O}[n]:3122 \\
 \text{O}\left[n^2\right]:68 & \text{O}\left[n^3\right]:1
\end{array}
\\
\hline
4 & 
\begin{array}{lll}
\text{O}[1]:16 & \text{O}[n]:5 \\
 \text{O}[\text{Exp}]:3 & \text{}
\end{array}
& 
\begin{array}{lll}
\text{O}[1]:124 & \text{O}[n]:34 \\
 \text{O}\left[n^2\right]:9 & \text{O}\left[n^3\right]:15 \\
  \text{O}\left[n^4\right]:5 & \text{O}[\text{Exp}]:15 & \text{}
\end{array}
& 
\begin{array}{lll}
\text{O}[1]:1211 & \text{O}[n]:434 \\
\text{O}\left[n^2\right]:101 & \text{O}\left[n^3\right]:181 \\
\text{O}\left[n^4\right]:59 & \text{O}[\text{Exp}]:156
\end{array}
\\
\hline
5 & 
\begin{array}{lll}
\text{O}[1]:2 & \text{O}[n]:2 & \text{}
\end{array}
& 
\begin{array}{lll}
\text{O}[1]:34 & \text{O}[n]:34 & \text{}
\end{array}
& 
\begin{array}{lll}
\text{O}[1]:289 & \text{O}[n]:285 \\
\text{O}\left[n^2\right]:8 & \text{}
\end{array}
\\
\hline
6 & 
\begin{array}{lll}
\text{O}[1]:3 & \text{O}[n]:3 & \text{}
\end{array}
& 
\begin{array}{lll}
\text{O}[1]:68 & \text{O}[n]:74 & \text{}
\end{array}
& 
\begin{array}{lll}
\text{O}[1]:576 & \text{O}[n]:668 \\
\text{O}\left[n^2\right]:9 & \text{O}\left[n^3\right]:3
\end{array}
\\
\hline
7 & 
\begin{array}{lll}
\text{O}[1]:10 & \text{} & \text{}
\end{array}
& 
\begin{array}{lll}
\text{O}[1]:54 & \text{O}[n]:8 & \text{}
\end{array}
& 
\begin{array}{lll}
\text{O}[1]:368 & \text{O}[n]:94 \\
\text{O}\left[n^3\right]:4 & \text{O}[\text{Exp}]:6
\end{array}
\\
\hline
8 & 
\begin{array}{lll}
\text{O}[n]:1 & \text{O}\left[n^2\right]:1 & \text{}
\end{array}
& 
\begin{array}{lll}
\text{O}[n]:13 & \text{O}\left[n^2\right]:13 & \text{}
\end{array}
& 
\begin{array}{lll}
\text{O}[n]:112 & \text{O}\left[n^2\right]:107 \\
\text{O}\left[n^3\right]:4 & \text{O}[\text{Exp}]:1
\end{array}
\\
\hline
9 & 
\begin{array}{lll}
\text{O}[1]:2 & \text{O}[n]:2 & \text{}
\end{array}
& 
\begin{array}{lll}
\text{O}[1]:58 & \text{O}[n]:54 \\
 \text{O}\left[n^2\right]:4 & \text{}
\end{array}
& 
\begin{array}{lll}
\text{O}[1]:503 & \text{O}[n]:528 \\
\text{O}\left[n^2\right]:23 & \text{O}[\text{Exp}]:4
\end{array}
\\
\hline
10 & 
\begin{array}{lll}
\text{O}[n]:1 & \text{O}\left[n^2\right]:1 & \text{}
\end{array}
& 
\begin{array}{lll}
\text{O}[n]:11 & \text{O}\left[n^2\right]:11 & \text{}
\end{array}
& 
\begin{array}{lll}
\text{O}[n]:114 & \text{O}\left[n^2\right]:110 \\
\text{O}\left[n^3\right]:1 & \text{O}[\text{Exp}]:3
\end{array}
\\
\hline
11 & 
\begin{array}{lll}
\text{O}[n]:1 & \text{O}\left[n^2\right]:1 & \text{}
\end{array}
& 
\begin{array}{lll}
\text{O}[n]:11 & \text{O}\left[n^2\right]:11 & \text{}
\end{array}
& 
\begin{array}{lll}
\text{O}[n]:91 & \text{O}\left[n^2\right]:88 \\
\text{O}\left[n^3\right]:1 & \text{O}[\text{Exp}]:2 
\end{array}
\\
\hline
12 & 
\begin{array}{lll}
\text{O}[n]:1 & \text{O}\left[n^2\right]:1 & \text{}
\end{array}
& 
\begin{array}{lll}
\text{O}[n]:12 & \text{O}\left[n^2\right]:12 & \text{}
\end{array}
& 
\begin{array}{lll}
\text{O}[n]:120 & \text{O}\left[n^2\right]:112 \\
\text{O}\left[n^3\right]:3 & \text{O}[\text{Exp}]:5
\end{array}
\\
\hline
13 & 
\begin{array}{lll}
\text{O}[1]:5 & \text{O}[n]:5 & \text{}
\end{array}
& 
\begin{array}{lll}
\text{O}[1]:39 & \text{O}[n]:43 & \text{}
\end{array}
& 
\begin{array}{lll}
\text{O}[1]:431 & \text{O}[n]:546 \\
 \text{O}\left[n^2\right]:1  & \text{}
\end{array}
\\
\hline
14 & 
\begin{array}{lll}
\text{O}[1]:4 & \text{O}[n]:4 & \text{}
\end{array}
& 
\begin{array}{lll}
\text{O}[1]:14 & \text{O}[n]:14 & \text{}
\end{array}
& 
\begin{array}{lll}
\text{O}[1]:119 & \text{O}[n]:121 \\
\text{O}\left[n^2\right]:5 & \text{O}\left[n^3\right]:1
\end{array}
\\
\hline
15 & 
\begin{array}{lll}
\text{O}[1]:2 & \text{} & \text{}
\end{array}
& 
\begin{array}{lll}
\text{O}[1]:11 & \text{O}[n]:1 & \text{}
\end{array}
& 
\begin{array}{lll}
\text{O}[1]:69 & \text{O}[n]:15 \\
\text{O}\left[n^2\right]:1 & \text{O}[\text{Exp}]:3
\end{array}
\\
\hline
16 & 
\begin{array}{lll}
\text{O}[1]:18 & \text{} & \text{}
\end{array}
& 
\begin{array}{lll}
\text{O}[1]:27 & \text{O}[n]:7 & \\
\text{O}\left[n^3\right]:1 & \text{O}[\text{Exp}]:3
\end{array}
& 
\begin{array}{lll}
\text{O}[1]:233 & \text{O}[n]:63 \\
\text{O}\left[n^2\right]:15 & \text{O}\left[n^3\right]:24 \\
\text{O}\left[n^4\right]:4 & \text{O}[\text{Exp}]:29
\end{array}
\\
\hline
17 & 
\begin{array}{lll}
\text{O}[1]:2 & \text{O}[n]:2 & \text{}
\end{array}
& 
\begin{array}{lll}
\text{O}[1]:33 & \text{O}[n]:33 & \text{}
\end{array}
& 
\begin{array}{lll}
\text{O}[1]:298 & \text{O}[n]:294 \\
\text{O}\left[n^2\right]:2 & \text{O}\left[n^3\right]:2
\end{array}
\\
\hline
18 & 
\begin{array}{lll}
\text{O}[1]:1 & \text{O}[n]:1 & \text{}
\end{array}
& 
\begin{array}{lll}
\text{O}[1]:9 & \text{O}[n]:9 & \text{}
\end{array}
& 
\begin{array}{lll}
\text{O}[1]:94 & \text{O}[n]:94 & \text{}
\end{array}
\\
\hline
19 & 
\begin{array}{lll}
\text{O}[1]:1 & \text{O}[n]:1 & \text{}
\end{array}
& 
\begin{array}{lll}
\text{O}[1]:78 & \text{O}[n]:87 \\
 \text{O}\left[n^2\right]:1 & \text{}
\end{array}
& 
\begin{array}{lll}
\text{O}[1]:1075 & \text{O}[n]:1591 \\
\text{O}\left[n^2\right]:28 & \text{}
\end{array}
\\
\hline
20 & 
\begin{array}{lll}
\text{O}[1]:1 & \text{O}[n]:1 & \text{}
\end{array}
& 
\begin{array}{lll}
\text{O}[1]:15 & \text{O}[n]:15 & \text{}
\end{array}
& 
\begin{array}{lll}
\text{O}[1]:76 & \text{O}[n]:75 \\
\text{O}\left[n^2\right]:1 & \text{}
\end{array}
\\
\hline
21 & 
\begin{array}{lll}
\text{O}[1]:1 & \text{O}[n]:1 & \text{}
\end{array}
& 
\begin{array}{lll}
\text{O}[1]:21 & \text{O}[n]:21 & \text{}
\end{array}
& 
\begin{array}{lll}
\text{O}[1]:171 & \text{O}[n]:173 & \text{}
\end{array}
\\
\hline
22 & 
\begin{array}{lll}
\text{O}[1]:1 & \text{O}[n]:1 & \text{}
\end{array}
& 
\begin{array}{lll}
\text{O}[1]:14 & \text{O}[n]:14 & \text{}
\end{array}
& 
\begin{array}{lll}
\text{O}[1]:203 & \text{O}[n]:203 \\
\text{O}\left[n^2\right]:2 & \text{O}[\text{Exp}]:4
\end{array}\\
\hline
\end{array}
\]}

}

%

No essentially (different asymptotic behavior) faster runtime was found in (3,2) compared to (2,2). Thus, no speed-up was found other than by a linear factor as reported in Subsection  (\ref{linear}). That is, no algorithm in (3,2) computing a function in (2,2) was essentially faster than the fastest algorithm computing the same function in (2,2). Amusing findings were Turing machines both in (2,2) and (3,2) computing the tape identify function in as much as exponential time. They are an example of machines spending all resources to compute a simple function. Another example is the constant function f(n) = 0 computed in $O(n^2)$, $O(n^3)$, $O(n^4)$ and even $O(Exp)$.

\begin{figure}[htbp!]
 \centering
{\tiny
$
\begin{array}{|c|c|c|}
\hline
\text{$\#$} & \text{(3,2)} & \text{(4,2)} \\
\hline
1 &
\begin{array}{lll}
\text{O}[n],1265 & \text{O}\left[n^2\right],7 & \text{O}\left[n^3\right],1
\end{array}
&
\begin{array}{lll}
\text{O}[n],23739 & \text{O}\left[n^2\right],80 & \text{O}\left[n^3\right],6
\end{array}
\\
\hline
2 &
\begin{array}{lll}
\text{O}[n],82 & \text{O}\left[n^2\right],1 & \text{}
\end{array}
&
\begin{array}{lll}
\text{O}[n],1319 & \text{O}\left[n^2\right],28 & \text{}
\end{array}
\\
\hline
4 &
\begin{array}{lll}
\text{O}[n],133 & \text{O}\left[n^2\right],2 & \text{}
\end{array}
&
\begin{array}{lll}
\text{O}[n],2764 & \text{O}\left[n^2\right],72 & \text{O}\left[n^3\right],1
\end{array}
\\
\hline
6 &
\begin{array}{lll}
\text{O}[1],23 & \text{O}[n],34 & \text{O}\left[n^2\right],9 \\
\text{O}\left[n^3\right],15 & \text{O}\left[n^4\right],5 & \text{O}[\text{Exp}],15
\end{array}
&
\begin{array}{lll}
\text{O}[1],197 & \text{O}[n],377 & \text{O}\left[n^2\right],101 \\
\text{O}\left[n^3\right],181 & \text{O}\left[n^4\right],59 & \text{O}[\text{Exp}],156
\end{array}
\\
\hline
10 &
\begin{array}{lll}
\text{O}[n],54 & \text{O}\left[n^2\right],4 & \text{}
\end{array}
&
\begin{array}{lll}
\text{O}[n],502 & \text{O}\left[n^2\right],23 & \text{O}[\text{Exp}],4
\end{array}
\\
\hline
12 &
\begin{array}{lll}
\text{O}\left[n^2\right],11 & \text{} & \text{}
\end{array}
&
\begin{array}{lll}
\text{O}\left[n^2\right],110 & \text{O}\left[n^3\right],1 & \text{O}[\text{Exp}],3
\end{array}
\\
\hline
13 &
\begin{array}{lll}
\text{O}[n],63 & \text{O}\left[n^2\right],7 & \text{O}[\text{Exp}],4
\end{array}
&
\begin{array}{lll}
\text{O}[n],544 & \text{O}\left[n^2\right],54 & \text{O}\left[n^3\right],16 \\
\text{O}[\text{Exp}],26 & \text{} & \text{}
\end{array}
\\
\hline
32 &
\begin{array}{lll}
\text{O}\left[n^2\right],12 & \text{} & \text{}
\end{array}
&
\begin{array}{lll}
\text{O}\left[n^2\right],112 & \text{O}\left[n^3\right],3 & \text{O}[\text{Exp}],5
\end{array}
\\
\hline
95 &
\begin{array}{lll}
\text{O}[1],8 & \text{O}[n],7 & \text{O}\left[n^3\right],1 \\
\text{O}[\text{Exp}],3 & \text{} & \text{}
\end{array}
&
\begin{array}{lll}
\text{O}[1],49 & \text{O}[n],63 & \text{O}\left[n^2\right],15 \\
\text{O}\left[n^3\right],24 & \text{O}\left[n^4\right],4 & \text{O}[\text{Exp}],29
\end{array}
\\
\hline
100 &
\begin{array}{lll}
\text{O}[n],9 & \text{O}\left[n^2\right],1 & \text{}
\end{array}
&
\begin{array}{lll}
\text{O}[n],90 & \text{O}\left[n^2\right],4 & \text{}
\end{array}
\\
\hline
112 &
\begin{array}{lll}
\text{O}[n],5 & \text{O}\left[n^2\right],1 & \text{O}\left[n^3\right],3
\end{array}
&
\begin{array}{lll}
\text{O}[n],41 & \text{O}\left[n^2\right],10 & \text{O}\left[n^3\right],12
\end{array}
\\
\hline
135 &
\begin{array}{lll}
\text{O}\left[n^2\right],13 & \text{} & \text{}
\end{array}
&
\begin{array}{lll}
\text{O}\left[n^2\right],107 & \text{O}\left[n^3\right],4 & \text{O}[\text{Exp}],1
\end{array}
\\
\hline
138 &
\begin{array}{lll}
\text{O}[n],5 & \text{O}\left[n^2\right],3 & \text{O}\left[n^3\right],1
\end{array}
&
\begin{array}{lll}
\text{O}[n],34 & \text{O}\left[n^2\right],13 & \text{O}\left[n^3\right],10
\end{array}
\\
\hline
292 &
\begin{array}{lll}
\text{O}[n],7 & \text{O}\left[n^2\right],1 & \text{}
\end{array}
&
\begin{array}{lll}
\text{O}[n],162 & \text{O}\left[n^2\right],5 & \text{}
\end{array}
\\
\hline
350 &
\begin{array}{lll}
\text{O}[n],1 & \text{O}\left[n^2\right],1 & \text{}
\end{array}
&
\begin{array}{lll}
\text{O}[n],20 & \text{O}\left[n^2\right],1 & \text{O}\left[n^3\right],2
\end{array}
\\
\hline
421 &
\begin{array}{lll}
\text{O}\left[n^2\right],11 & \text{} & \text{}
\end{array}
&
\begin{array}{lll}
\text{O}\left[n^2\right],88 & \text{O}\left[n^3\right],1 & \text{O}[\text{Exp}],2
\end{array}
\\
\hline
422 &
\begin{array}{lll}
\text{O}\left[n^2\right],2 & \text{} & \text{}
\end{array}
&
\begin{array}{lll}
\text{O}\left[n^2\right],17 & \text{} & \text{}
\end{array}\\
\hline
\end{array}
$
}
\caption{Comparison of the distributions of time classes of algorithms computing a particular function for a sample of 17 functions computed both in (3,2) and (4,2). The function number is an index from the list containing all 429 functions considered by us, that were computed in both TM spaces.}
 \label{Figure:TableComplClasses32Versus42}
\end{figure}


In (2,2) however, there are very few non-linear time algorithms and functions\footnote{We call a function $O(f)$ time, when its asymptotically fastest algorithm is $O(f)$ time.}. However as we see from the similar table for (3,2) versus (4,2) in Figure~\ref{Figure:TableComplClasses32Versus42}, also between these spaces there is no essential speed-up witnessed. Again only speed-up by a linear factor can occur.

\subsection{Distributions over the complexity classes}

Figure~\ref{timecomp1} shows the distribution of the the TMs over the  different asymptotic complexity classes. On the level of this distribution we see that the slow-down is manifested in a shift of the distribution to the right of the spectrum.

\begin{figure}
  \centering{}
  \includegraphics[width=10cm]{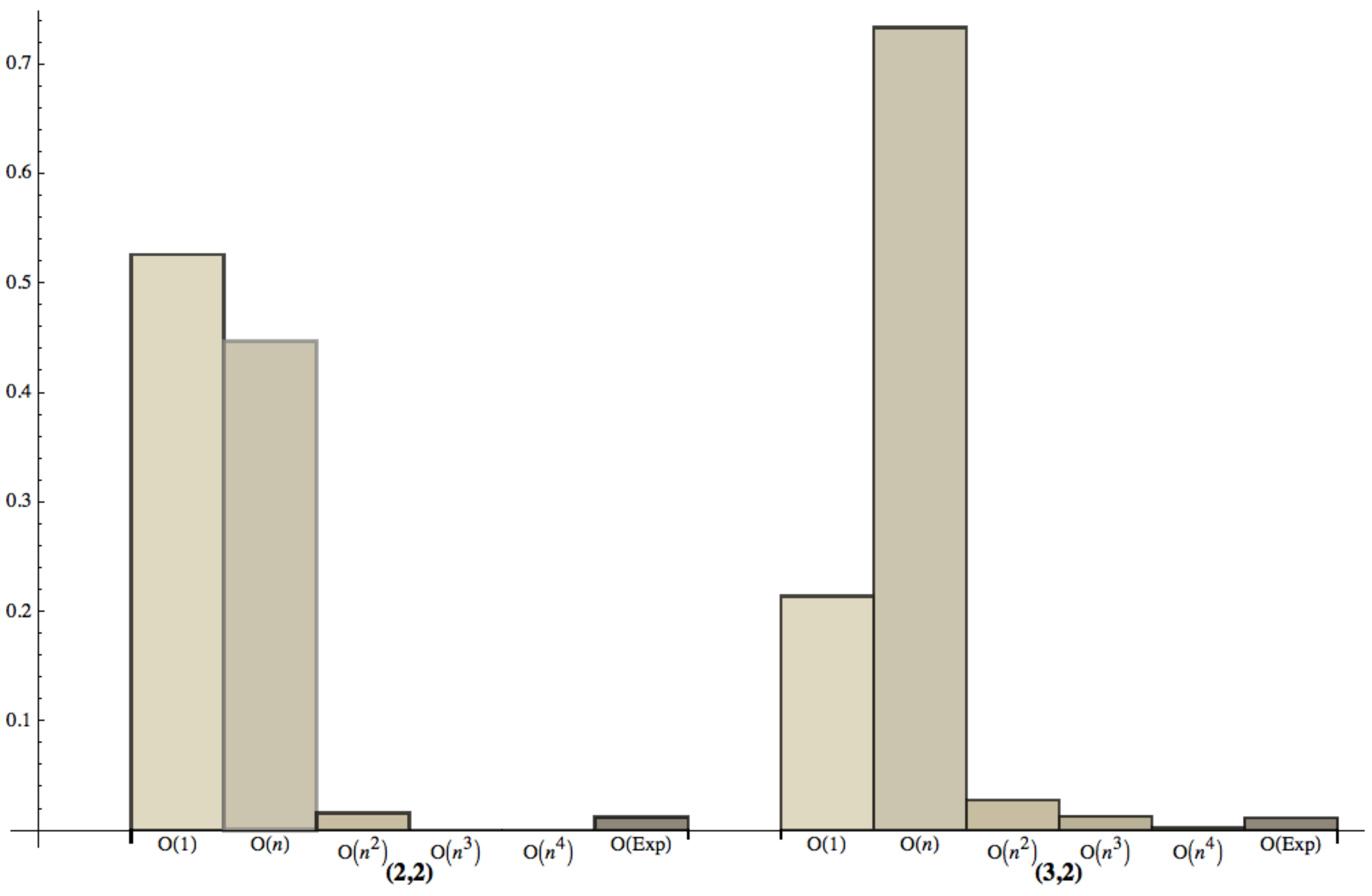}
  \caption{Time complexity distributions of (2,2) (left) and (3,2) (right).}
  \label{timecomp1}
\end{figure}

We have far to few data to possibly speak of a prior in the distributions of our TMs over these complexity classes. However, we do remark the following. In the following table we see the fraction per complexity class of the non-constant TMs for each space. Even though for (4,2) we do not at all work with a representative sampling still there is some similarity in the fractions. Most notably within one TM space, the ratio of one complexity class to another is in the same order of magnitude as the same ratio in one of the other spaces. Notwithstanding this being a far cry from a prior, we do find it worth\footnote{Although we have very few data points we could still audaciously calculate the Pearson coefficient correlations between the classes that are inhabited within one of the spaces. Among (2,2), (3,2) and (4,2) the Pearson coefficients are: 0.999737, 0.999897 and 0.999645.} while mentioning.


\[
\begin{array}{|c|c|c|c|}
\hline
\textrm{pr} & \textrm{(2,2)} & \textrm{(3,2)} & \textrm{(4,2)} \\
\hline
\textrm{O(n)} & 0.941667 & 0.932911 & 0.925167 \\
\hline
\left.\textrm{O(}n^2\right) & \ \ 0.0333333 \ \ \ & \ \ 0.0346627 \ \ \ & \ \ 0.0462362 \ \ \ \\
\hline
\left.\textrm{O(}n^3\right) & 0 & 0.0160268 & 0.0137579 \\
\hline
\left.\textrm{O(}n^4\right) & 0 & 0.0022363 & 0.00309552 \\
\hline
\ \ \ \textrm{O(Exp)}\ \ \ \  & 0.025 & 0.0141633 & 0.0117433\\
\hline
\end{array}
\]

\subsection{Quantifying the linear speed-up factor}
\label{linear}

For obvious reasons all functions computed in (2,2) are computed in (3,2). The most salient feature in the comparison of the (2,2) and (3,2) spaces is the prominent slowdown indicated by both the arithmetic and the harmonic averages. The space (3,2) spans a larger number of runtime classes. Figures \ref{comp1} and \ref{comp2} are examples of two functions computed in both spaces in a side by side comparison with the information of the function computed in (3,2) on the left side and the function computed by (2,2) on the right side. In \cite{JoostenSolerZenilDemonstration} a full overview of such side by side comparison is published. Notice that the numbering scheme of the functions indicated by the letter \emph{f} followed by a number may not be the same because they occur in different order in each of the (2,2) and (3,2) spaces but they are presented side by side  for comparison with the corresponding function number in each space. 

\begin{figure}
  \centering{}
  \includegraphics[width=9cm]{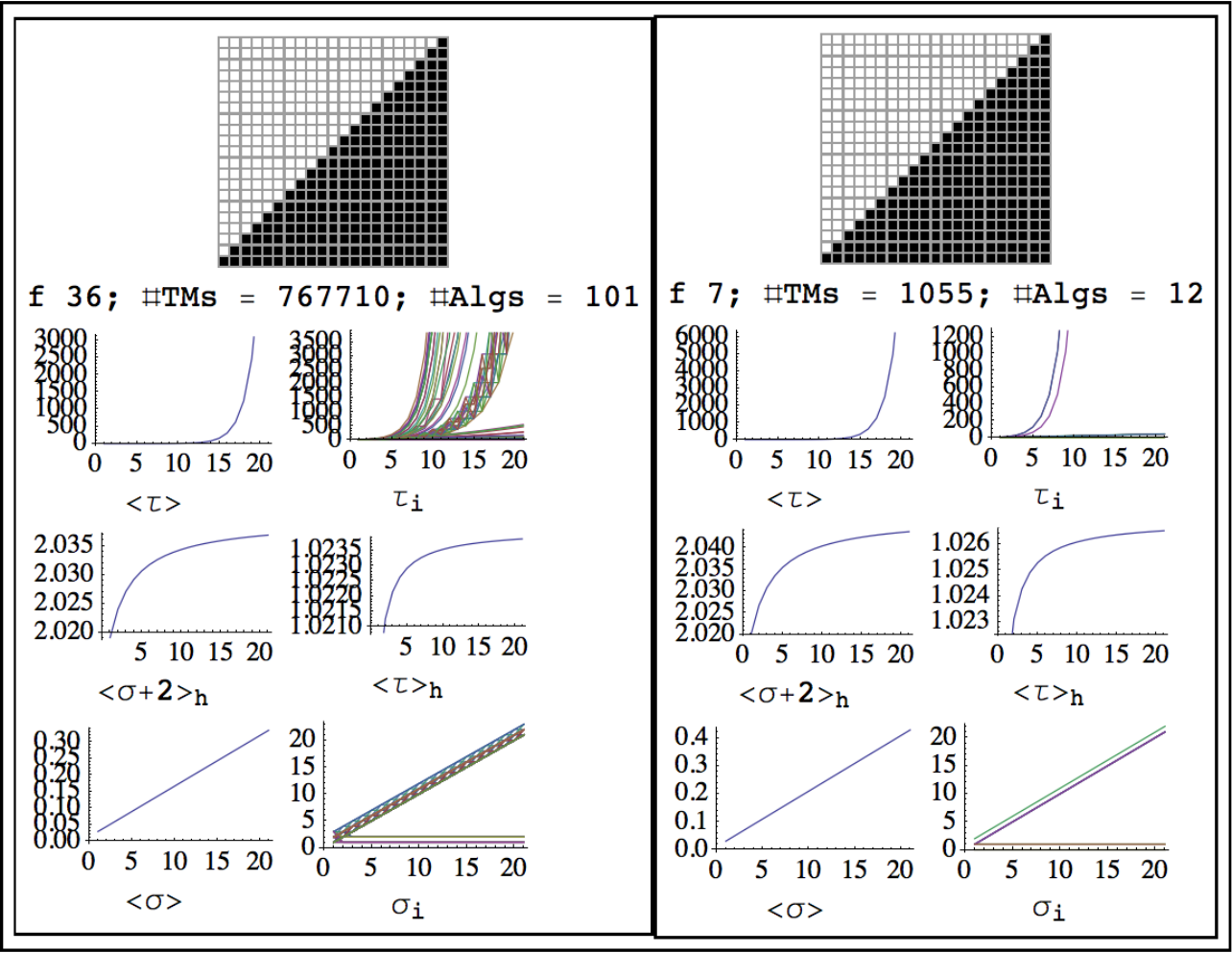}
  \caption{Side by side comparison of an example computation of a function in (2,2) and (3,2) (the identity function).}
  \label{comp1}
\end{figure}

\begin{figure}
  \centering
  \includegraphics[width=8.8cm]{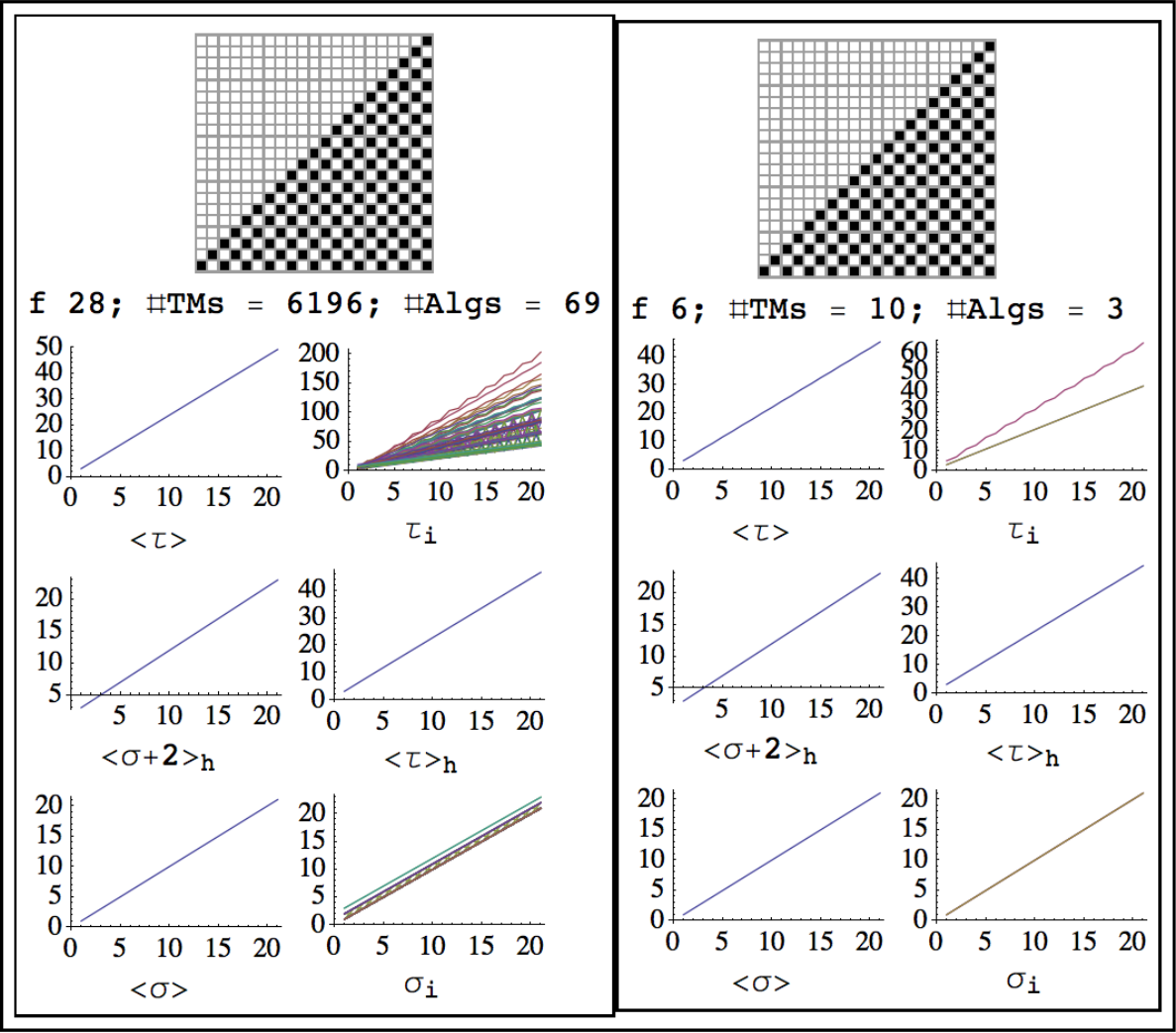}
    \caption{Side by side comparison of the computation of a function in (2,2) and (3,2).}
  \label{comp2}
\end{figure}

One important calculation experimentally relating descriptional (program-size) complexity and (time resources) computational complexity is the comparison of maximum of the average runtimes on inputs 0,$\ldots$,20, and the estimation of the speed-ups and slowdowns factors found in (3,2) with respect to (2,2). 

It turns out that 19 functions out of the 74 computed in (2,2) and (3,2) had at least one fastest computing algorithm in (3,2). That is a fraction of 0.256 of the 74 functions in (2,2). A further inspection reveals that among the 3\,414 algorithms in (3,2), computing one of the functions in (2,2), only 122 were faster. If we supposed that ``chances'' of speed-up versus slow-down on the level of algorithms were fifty-fifty, then the probability that we observed at most 122 instantiations of speed-up would be in the order of $10^{-108}$. Thus we can safely state that the phenomena of slow-down at the level of algorithms is significant.

Figure \ref{probdist} shows the scarceness of the speed-up and the magnitudes of such probabilities. Figures \ref{speedup} quantify the linear factors of speed-up showing the average and maximum. The typical average speed-up was 1.23 times faster for an algorithm found when there was a faster algorithm in (3,2) computing a function in (2,2).

\begin{figure}
  \centering
  \includegraphics[width=8cm]{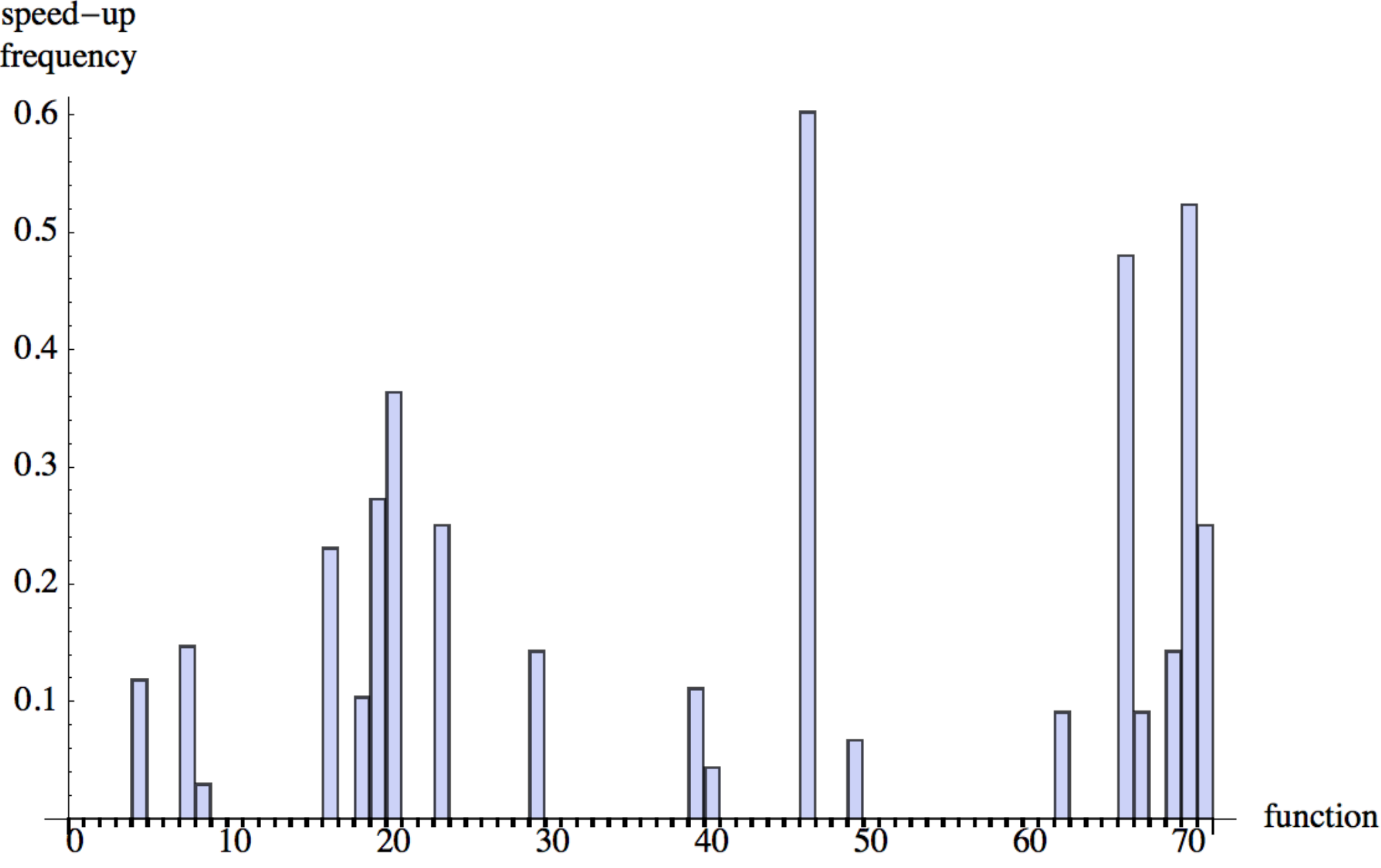}
    \caption{Distribution of speed-up probabilities per function. Interpreted as the probability of picking an algorithm in (3,2) computing faster an function in (2,2).}
  \label{probdist}
\end{figure}

In contrast, slowdown was generalized, with no speed-up for 0.743 of the functions. Slowdown was not only the rule but the significance of the slowdown was much larger than the scarce speed-up phenomenon. The average algorithm in (3,2) took 2\,379.75 longer and the maximum slowdown was of the order of $1.19837\times10{^6}$ times slower than the slowest algorithm computing the same function in (2,2).

\begin{figure}
  \centering
  \includegraphics[width=6cm]{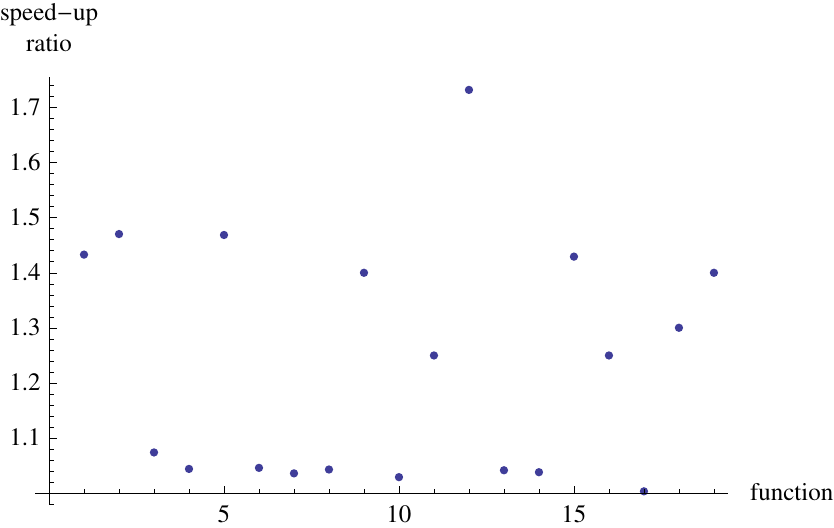}
  \includegraphics[width=6cm]{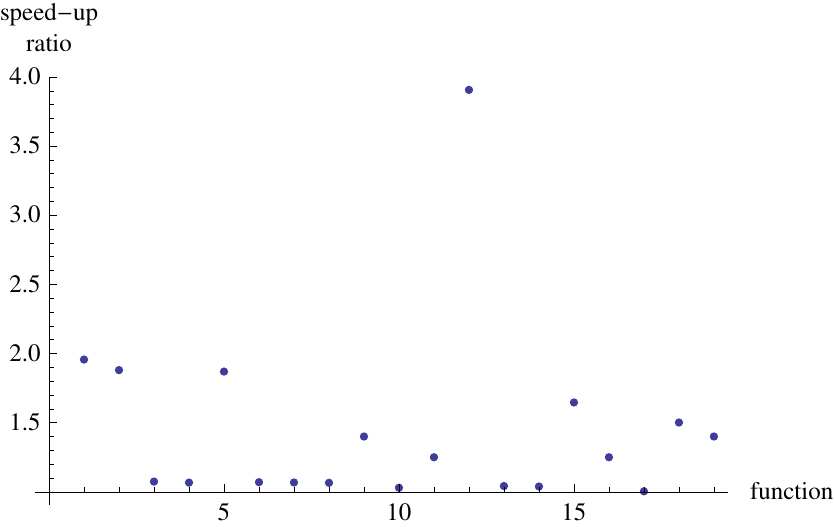}
    \caption{Speed up significance: on the left average and on the right maximum speed-ups.}
  \label{speedup}
\end{figure}

As mentioned before there is also no essential speed-up in the space (4,2) compared to (3,2) and only linear speed-up was witnessed at times. But again, slow-down was the rule. Thus, (4,2) confirmed the trend between (2,2) and (3,2), that is that linear speed up is scarce yet present, three functions (0.0069) sampled from (3,2) had faster algorithms in (4,2) that in average took from 2.5 to 3 times less time to compute the same function, see Figure~\ref{Figure:SpeedUpFactors32Versus42}. 

\begin{figure}
  \centering
  \includegraphics[width=8cm]{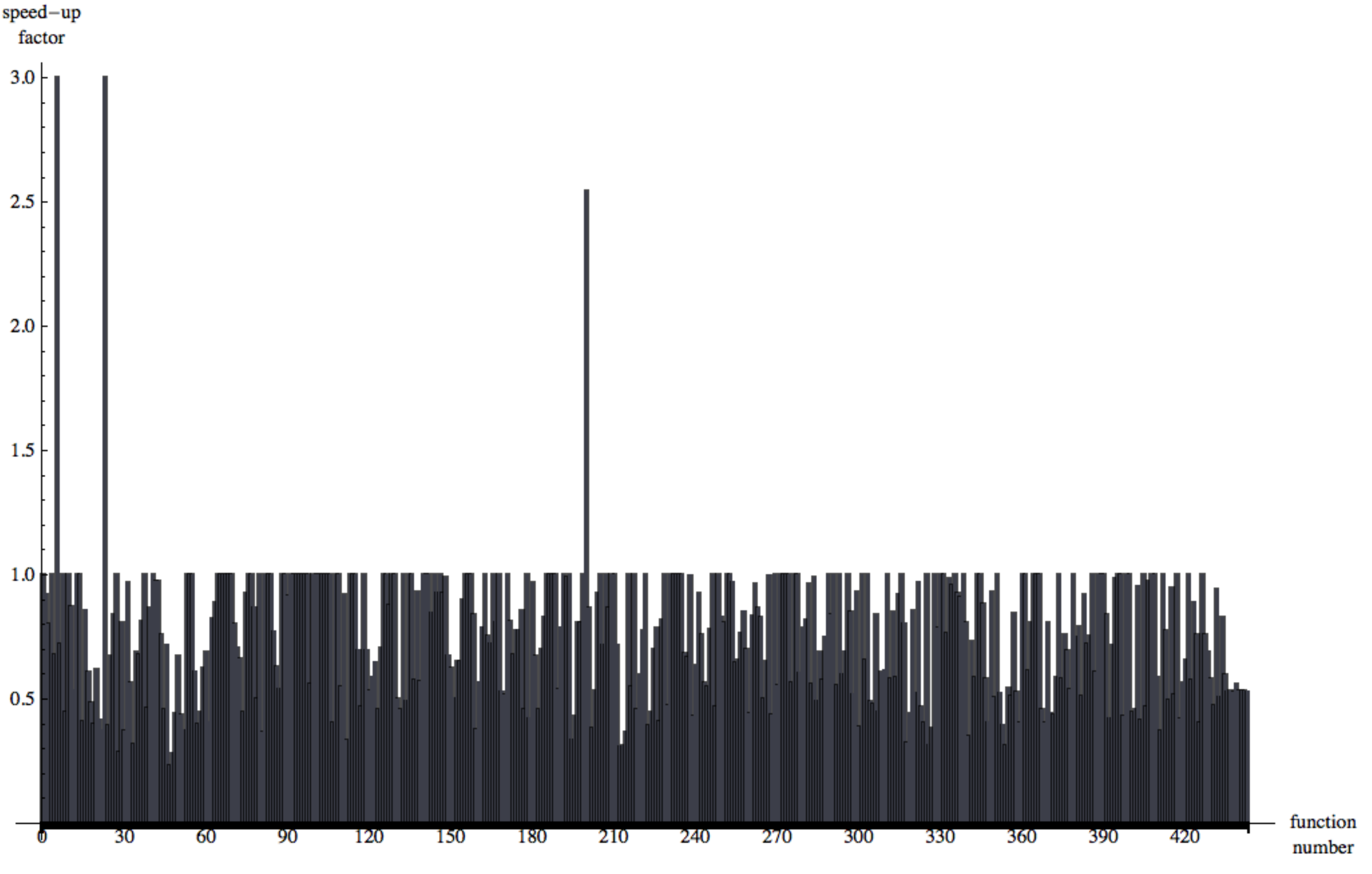}
    \caption{Distribution of average speed-up factors among all selected 429 functions computed in (3,2) and (4,2).}
  \label{probdist}
\end{figure}\label{Figure:SpeedUpFactors32Versus42}

\section{Summarizing}
In this enterprise we have in approximation determined and computed all functions that are computable on Turing Machines with two tape symbols and at most three states. The peculiarities of these functions have been investigated and both spaces of Turing Machines are compared to each other. The study and comparison was then extended to a sample of Turing machines with four states and two symbols.

The most prominent findings have been:
\begin{itemize}
\item On average, functions tend to be computed slower when more states are added

\item
The halting distributions (per number of computation steps, the number of pairs (TM, input) that halt in this amount of steps) drop quickly and exhibit clear phase-transitions.

\item
The behavior of a TM on a very small initial segment of input values determines the full behavior of that TM on all input values

\end{itemize}
Apart from these main findings the paper addresses additional features of TM spaces, like the definable sets, the possible type of computations, clustering phenomena, the way in which the number of states is manifested in the type of computations and computable functions, and more related questions.

Notwithstanding the fact that the realm of the Turing Machine spaces seems simple and for small number of states no computationally universal phenomena occur yet, the authors were surprised and struck by the overwhelmingly rich and intriguing structures that were found in what they became to call \emph{the micro-cosmos of small Turing machines}.

\section*{Acknowledgements}
The initial motivation and first approximation of this project was developed during the NKS Summer School 2009 held at the Istituto di Scienza e Tecnologie dell'Informazione, CNR in Pisa, Italy. We wish to thank Stephen Wolfram for interesting suggestions and guiding questions. Furthermore, we wish to thank the CICA center and its staff for providing access to their supercomputing resources and their excellent support. 


\begin{thebibliography}{99}
\bibitem{bennett} C.H. Bennett. Logical Depth and Physical Complexity in Rolf Herken (ed), \emph{The Universal Turing Machine--a Half-Century Survey}; Oxford University Press, p 227-257, 1988.
\bibitem{bennett2} C.H. Bennett. How to define complexity in physics and why, in \emph{Complexity, entropy and the physics of information}, Zurek, W. H.; Addison-Wesley, Eds.; SFI studies in the sciences of complexity, p 137-148, 1990.
\bibitem{cook} M. Cook. Universality in Elementary Cellular Automata, \emph{Complex Systems}, 2004.
\bibitem{cook2} S. Cook. The complexity of theorem proving procedures, \emph{Proceedings of the Third Annual ACM Symposium on Theory of Computing}, p 151-158, 1971.
\bibitem{chaitin}  G.J. Chaitin. \emph{G\"odel's theorem and information,} \emph{Int. J. Theoret. Phys.}, 21, p 941-954, 1982.
\bibitem{calude}  C.S. Calude, M.A. Stay, Most programs stop quickly or never halt, \emph{Advances in Applied Mathematics}, 40, p 295-308, 2005.
\bibitem{fredkin} E. Fredkin. Digital Mechanics, \emph{Physica D}, p 254-70, 1990.

\bibitem{JoostenDemonstration2010} J. J. Joosten. Turing Machine Enumeration: NKS versus Lexicographical, \emph{Wolfram Demonstrations Project}; \texttt{http://demonstrations.wolfram.com/\\TuringMachineEnumerationNKSVersusLexicographical/}, 2010.

\bibitem{JoostenSolerZenilDemonstration} J. J. Joosten, F. Soler Toscano, H. Zenil. Turing machine runtimes per number of states, to appear in \emph{Wolfram Demonstrations Project}, 2011.

\bibitem{kolmogorov} A. N. Kolmogorov. Three approaches to the quantitative definition of information. \emph{Problems of Information and Transmission}, 1(1), p 1--7, 1965.
\bibitem{levin} L. Levin. Universal search problems, \emph{Problems
  of Information Transmission} 9 (3), p 265-266, 1973.
\bibitem{lin} S. Lin \& T. Rado. Computer Studies of Turing Machine Problems, \emph{J. ACM}, 12, p 196-212, 1965.
\bibitem{lloyd} S. Lloyd, Programming the Universe; \emph{Random House}, 2006.
\bibitem{wolfram} S. Wolfram, A New Kind of Science; \emph{Wolfram Media}, 2002.
\bibitem{wolfram23} Wolfram's 2, 3 Turing Machine Research Prize, \texttt{http://www.wolframscience.com/\\
prizes/tm23/}; Accessed on June, 24, 2010.
\bibitem{neary} R. Neary, D. Woods. On the time complexity of 2-tag systems and small universal turing machines, In \emph{FOCS}; IEEE Computer Society, p 439-448, 2006.
\bibitem{woodsNeary} D. Woods, T. Neary. Small semi-weakly universal Turing machines.
\emph{Fundamenta Informaticae}, 91, p 161-177, 2009.
\bibitem{wunderlich} M. C. Wunderlich, A general class of sieve generated sequences, \emph{Acta Arithmetica} \Rmnum{16}, p 41-56, 1969.
\bibitem{ZJS2010} H. Zenil, F. Soler Toscano, J. J. Joosten. Empirical encounters with computational irreducibility and unpredictability, submitted, December 2010.


\end{thebibliography}
\end{document}